\newtheorem{proposition}{Proposition}
\newtheorem{corollary}{Corollary}
\crefname{theorem}{Theorem}{Theorems}
\crefname{lemma}{Lemma}{Lemmata}
\crefname{observation}{Observation}{Observations}
\crefname{section}{Section}{Sections}
\crefname{figure}{Figure}{Figures}
\crefname{algorithm}{Algorithm}{Algorithms}
\crefname{proposition}{Proposition}{Propositions}
\newcommand{\G}{\mathcal G}
\newcommand{\TE}{\mathcal E}
\newcommand{\yes}{\emph{yes}}
\newcommand{\lifetime}{\tau}
\newcommand{\TG}{\mathcal G}
\newcommand{\TGcompact}{\TG = (V,(E_t)_{t=1}^\lifetime)}
\newenvironment{nscenter}
 {\parskip=5pt\par\nopagebreak\centering}
 {\par\noindent\ignorespacesafterend}
\newcommand{\problemdef}[3]{
\begin{center}   
    \fbox{~\begin{minipage}{.9\textwidth}
      \vspace{2pt} 
     
      \noindent
      \normalsize\textsc{#1}
      
      \vspace{4pt}
      \setlength{\tabcolsep}{3pt}
      \renewcommand{\arraystretch}{1.0}
      \begin{tabularx}{\textwidth}{@{}lX@{}}
	\normalsize\textbf{Input:} 	& \normalsize#2 \\
	\normalsize\textbf{Question:} 	& \normalsize#3
      \end{tabularx}
    \end{minipage}}
    \end{center}
}
\tikzset{,fill=red
    none/.style={color=white},
    vertexset/.style={circle,draw,thick,inner sep=3pt, },
    vertexx/.style={circle,draw,thick,inner sep=2pt},
    vertex/.style={circle,draw,fill=black, inner sep= 1.5pt},
    evertex/.style={circle,draw,fill=black, inner sep= 2pt},
    node/.style={color=black,circle,draw,dotted,very thick, inner sep= 1pt,preaction={ draw,solid,blue,-, line width=0.8pt, double=blue, double distance=2\pgflinewidth, } },
    arrow/.style={->,> = latex',shorten >= 5pt,shorten <= 5pt},
    edge/.style={-,> = latex',very thick},
    edgeorange/.style={-,> = latex',preaction={ draw,orange,-, line width=0.8pt, double=orange, double distance=2\pgflinewidth, }},
    dedge/.style={-,> = latex',dashed},
    edgedashed/.style={-,> = latex',dashed,,thick,preaction={draw,blue,-, double=blue, line width=0.8pt, double distance=2\pgflinewidth, }},
    edgedasheddot/.style={-,> = latex',dash dot dot,thick,preaction={ draw,orange,-, line width=0.8pt, double=orange, double distance=2\pgflinewidth, }},
    edgedot/.style={-,> = latex',dotted,preaction={draw,blue,-, double=blue, line width=0.8pt, double distance=2\pgflinewidth, }},
    eedge/.style={-,dashed,preaction={ draw,blue,-, double=blue, double distance=2\pgflinewidth, }},
    redge/.style={-,ultra thick,preaction={draw,blue,-, double=blue, double distance=2\pgflinewidth, }},
    redgedot/.style={dotted,thick,preaction={draw,blue,-, double=blue, double distance=2\pgflinewidth, }},
    ora/.style={preaction={ draw,orange,-, line width=0.8pt, double=orange, double distance=2\pgflinewidth, }},
    blu/.style={preaction={ draw,blue,-, line width=0.8pt, double=blue, double distance=2\pgflinewidth, }},
    bludashed/.style={preaction={ draw,dashed,blue,-, line width=0.8pt, double=blue, double distance=2\pgflinewidth, }},
    gre/.style={preaction={ draw,green,-, line width=0.8pt, double=green, double distance=2\pgflinewidth, }},
    fillblue/.style={preaction={draw,fill,blue!50,-, double=blue!80, double distance=2\pgflinewidth, }},
    orablu/.style={preaction={ draw,orange,-, line width=0.8pt, double=blue, double distance=2\pgflinewidth, }},
	set/.style={draw,ellipse,inner sep=0pt,align=center},
    bluo/.style={preaction={ draw,blue!15,-, line width=0.8pt, double=blue!15, double distance=2\pgflinewidth, }},
    orao/.style={preaction={ draw,orange!15,-, line width=0.8pt, double=orange!15, double distance=2\pgflinewidth, }},
    shaded/.style={line width=0, fill=black!10, color=black!10},
}
\theoremstyle{plain}
\newtheorem{theorem}{Theorem}
\newtheorem{lemma}{Lemma}
\theoremstyle{definition}
\newtheorem{definition}{Definition}
\newtheorem{algorithm}{Algorithm}
\crefname{definition}{Definition}{Definitions}
\newcommand{\dvcn}{$\Delta$-vertex cover number}
\title{A Faster Parameterized Algorithm for Temporal Matching} %
\date{}
\author{Philipp Zschoche}
\affil{\small Technische Universit\"at Berlin, Algorithmics and Computational~Complexity,
		Berlin, Germany\\ 
  \texttt{zschoche@tu-berlin.de}}
\begin{document}
\maketitle
\begin{abstract}
		A temporal graph is a sequence of graphs (called layers) over the same vertex set---describing a graph topology which is subject to discrete changes over time.
A~$\Delta$-temporal matching $M$ is a set of time edges $(e,t)$ (an edge $e$ paired up with a point in time $t$) 
such that for all distinct time edges $(e,t),(e',t') \in M$ we have that $e$ and $e'$ do not share an endpoint, or 
the time-labels $t$ and $t'$ are at least~$\Delta$ time units apart.
Mertzios et al.~[STACS~'20] provided a $2^{O(\Delta\nu)}\cdot |\TG|^{O(1)}$-time algorithm
to compute the maximum size of a $\Delta$-temporal matching in a temporal graph $\TG$, where $|\TG|$ denotes the size of $\TG$,
and $\nu$ is the \dvcn{} of $\TG$.
The \dvcn{} is the minimum number $\nu$ 
such that the classical vertex cover number of the union of any $\Delta$ consecutive layers of the temporal graph is upper-bounded by $\nu$.
We show an improved algorithm to compute a $\Delta$-temporal matching of maximum size with a 
running time of $\Delta^{O(\nu)}\cdot |\TG|$ and hence provide an exponential speedup in terms of $\Delta$.

\end{abstract}

\section{Introduction}
\label{sec:intro}
Matchings are one of the most fundamental and best studied notions in graph theory, see \citet{lovasz2009matching} and \citet{schrijver2003combinatorial} for an overview.
Recently, \citet{baste2018temporal} and \citet{MMNZZ} studied matchings in temporal graphs. 
A \emph{temporal graph} $\TGcompact$ consists of a set $V$ of vertices and an ordered list of $\lifetime$ 
edge sets $E_1,E_2,\dots,E_\lifetime$.
A tuple $(e,t)$ is a \emph{time edge} of $\TG$ if $e \in E_t,t \in \{ 1,2,\dots,\tau \}$.
Two \emph{time edges} $(e,t)$ and $(e',t')$ are \emph{$\Delta$-independent} 
whenever the edges $e,e'$ do not share an endpoint or 
their time labels $t,t'$ are at least~$\Delta$ time units apart 
from each other (that is, $|t -t'| \geq \Delta$).\footnote{Throughout the paper, $\Delta$ always refers to this number, 
and never to the maximum degree of a static graph (which is another common use of $\Delta$).}
A~\emph{$\Delta$-temporal matching} $M$ of a temporal graph $\G$ is a set of
time edges of $\G$ which are pairwise $\Delta$-independent.
This leads naturally to the following decision problem, introduced by \citet{MMNZZ}.
\problemdef{Temporal Matching}
{A temporal graph $\TGcompact$ and integers $k,\Delta \in \mathbb N$.}
{Is there a size-$k$ $\Delta$-temporal matching in $\TG$?}
Without loss of generality, we assume that $\Delta \leq \tau$, because otherwise \textsc{Temporal Matching} can be solved using any maximum matching algorithm.
While \textsc{Temporal Matching} is polynomial-time solvable if the temporal graph has $\tau \leq 2$ layers,
it becomes NP-hard, even if $\tau=3$ and $\Delta=2$ \cite{MMNZZ}.
Driven by this NP-hardness,
\citet{MMNZZ} showed an FPT-algorithm for \textsc{Temporal Matching},
when parameterized by $\Delta$ and the maximum matching size of 
the \emph{underlying graph} $G_{\downarrow}(\TG) :=(V,\bigcup_{i=1}^\tau E_i)$ of 
the input temporal graph~$\TGcompact$.
On a historical note, one has to mention that \citet{baste2018temporalctw} 
introduced temporal matchings in a slightly different way.
The main difference to the model of \citet{MMNZZ} which we also adopt here is that the model of \citet{baste2018temporal}
requires edges to exist in at least $\Delta$ consecutive time steps in order for them to be eligible for a matching.
However, with little preprocessing an instance of the model of \citet{baste2018temporal} can be reduced to our model and 
the algorithmic ideas presented by \citet{baste2018temporal} apply as well.
Notably, there is also the related problem 
\textsc{Multistage Matchings}: 
this is a radically different way to lift 
the notion of matchings into the temporal setting.
Here, we are given a temporal graph $\TGcompact$
and we want to find a perfect (or maximum) matching for each \emph{layer} $(V,E_i)$ such that the symmetric differences of matchings for consecutive layers are small \cite{gupta2014changing,chimani2020approximating,heeger2019multistage,bampis2018multistage}.

In this paper, we consider the vertex cover number\footnote{That is, the minimum number of vertices needed to cover all edges of a graph.}
to \emph{measure} the \emph{width} of local sections (that is, $\Delta$ consecutive layers) in temporal graphs.
We call this the~\emph{\dvcn} of a temporal graph~$\TGcompact$.
Intuitively, this is the minimum number $\nu$ of vertices which we need to hit (or cover) 
all edges in any $\Delta$ consecutive layers, also called $\Delta$-window, of the temporal graph.
Note that we do not need to use the same set of vertices for all $\Delta$-windows 
and that there are temporal graphs where the vertex cover number of the underlying graph is larger than the \dvcn{}.
Formally, the $\Delta$-vertex cover number of $\TG$ is the minimum number $\nu$ such that 
for all~$i \in [\tau-\Delta+1]$ the vertex cover number of $(V,\bigcup_{t=i}^{i+\Delta-1} E_t)$ is at most~$\nu$.
It is NP-hard to decide whether the \dvcn{} of a temporal graph is at most some given value, 
because this becomes the NP-hard \textsc{Vertex Cover} problem \cite{karp1972reducibility} if $\lifetime=\Delta=1$.
However, one can show fixed-parameter tractability when parametrizing by the 
\dvcn{} using the folklore search-tree algorithm \cite{DF13} for the classical vertex cover number on each $\Delta$-window of the temporal graph.
Observe that the \dvcn{} can be smaller but not larger than the smallest sliding $\Delta$-window vertex cover, see \citet{akrida2020temporal} for details.
Note that also \textsc{Vertex Cover} has been studied in the multistage setting
\cite{DBLP:conf/iwpec/FluschnikNRZ19}.

\citet{MMNZZ} analyzed the exponential explosion in the running time of their algorithm for \textsc{Temporal Matching}
with respect to the parameters~$\Delta$ 
and the maximum matching size of the underlying graph.
It is easy to check that the running time is also upper-bounded by $2^{O(\nu\Delta)}\cdot |\TG|^{O(1)}$, 
where $\nu$ is the \dvcn{} of the input temporal graph $\TG$.
We replace the parameter maximum matching size of the underlying graph by the \dvcn{},
because the narrowed view on $\Delta$-windows makes it potentially smaller 
than its counterpart on the underlying graph
and the vertex cover number is a better established parameter in the parameterized complexity literature 
than the maximum matching size even though they are only a constant factor away from each other.

This paper contributes an improved algorithm for \textsc{Temporal Matching}
with a running time of $\Delta^{O(\nu)}\cdot |\TG|$.
Hence, this is an exponential speedup in terms of $\Delta$ compared to the algorithm of \citet{MMNZZ}.
While our algorithm closely follows the strategy of the algorithm of \citet{MMNZZ},
the speedup results from a refined application of representative sets.
Before we describe the details in \cref{sec:algo}, 
we introduce further basic notations in the next section.

\section{Preliminaries}

We denote by $\log(x)$ the ceiling of the binary logarithm of $x$ ($\lceil \log_2(x)\rceil$).
A $p$-family is a family of sets where each set has size $p$.
We refer to a set of consecutive natural numbers $[i,j] := \{ k \in \mathbb N \mid i \leq k \leq j\}$ for some $i,j \in \mathbb N$ as an \emph{interval}.
If~$i=1$, then we denote $[i,j]$ simply by $[j]$.
The \emph{neighborhood} of a vertex~$v$ and a vertex set $X$ in a graph $G=(V,E)$ is 
denoted by $N_G(v) := \{ u \in V \mid \{v,u\} \in E \}$ and $N_G(X) := \left(\bigcup_{v \in X} N_G(v)\right) \setminus X$, respectively.

The \emph{lifetime} of a temporal graph $\TGcompact$ is~$\tau$.
The \emph{size} of a temporal graph $\TGcompact$ is~$|\TG| := |V|+\sum_{t=1}^{\lifetime}\max\{{1,|E_{t}|\}}$. 
Furthermore, in accordance with the literature~\cite{CHMZ20,FMNRZ20,wu2016efficient,zschocheFMN18}, 
we assume that the lists of labels are given in ascending order. 
The \emph{set of time edges} $\TE(\TG)$ of a temporal graph~$\TGcompact$ is defined as $\{ (e,t) \mid e \in E_t \}$.
A pair $(v,t)$ is a \emph{vertex appearance} in a temporal graph $\TGcompact$ of $v$ at time $t$ if $v \in V$ and $t \in [\lifetime]$.
A time edge $(e,t)$ \emph{$\Delta$-blocks} a vertex appearance~$(v,t')$ 
(or $(v,t')$ is \emph{$\Delta$-blocked} by~$(e,t)$) 
if $v \in e$ and $|t - t'| \leq \Delta -1$.
For a time edge set $\mathcal E$ and integers~$a$ and~$b$, 
we denote by $\mathcal E[a,b] := \{ (e,t) \in \mathcal E \mid a \leq t \leq b \}$ the subset of $\mathcal E$ between the time steps $a$ and $b$.
Analogously, for a temporal graph $\TGcompact$ we denote by~$\TG[a,b]$ the temporal graph on the vertex set $V$ 
with the time edge set $\TE(\TG)[a,b]$.

A \emph{parameterized problem} is a language $L\subseteq \Sigma ^{*}\times \mathbb {N}$, where $\Sigma$ is a finite alphabet. 
The second component is called the parameter of the problem.
A parameterized problem $L$ is \emph{fixed-parameter tractable} 
if we can decide in $f(k)\cdot |x|^{O(1)}$ time 
whether a given instance $(x,k)$ is in $L$, where $f$ is an arbitrary function depending only on $k$.
An algorithm is an FPT-algorithm for parameter~$k$ if its running time is upper-bounded by $f(k)\cdot n^{O(1)}$, 
where $n$ is the input size and $f$ is a computable function depending only on $k$.

\section{The Algorithm}
\label{sec:algo}

\citet{MMNZZ} provided a $2^{O(\Delta\nu)} |\G|^{O(1)}$-time algorithm for \textsc{Temporal Matching}. 
We now develop an improved algorithm which runs in $\Delta^{O(\nu)} |\G|$.
Formally, we show the following.
\begin{theorem}
	\label{thm:fpt-for-vc-delta}
	\textsc{Temporal Matching} can be solved 
	in $\Delta^{O(\nu)}\cdot |\G|$ time, where $\nu$ is the \dvcn{} of $\TG$.
\end{theorem}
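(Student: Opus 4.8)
The plan is to replace the $2^{O(\Delta\nu)}$ bookkeeping of \citet{MMNZZ} by a left-to-right dynamic program over the layers $E_1,\dots,E_\tau$ whose table, after a representative-families compression, keeps only $\Delta^{O(\nu)}$ entries per time step. First, in time $O(\Delta\cdot|\TG|)$ and without knowing $\nu$ in advance, compute for every window $W_i=[i,i+\Delta-1]$ a vertex cover $S_i$ of $(V,\bigcup_{t\in W_i}E_t)$ of size $O(\nu)$: the endpoints of a maximal matching of that graph work, since a maximal matching is a vertex cover and a matching can never exceed a minimum vertex cover, so $|S_i|\le 2\nu$. Inside every window each time edge touches the corresponding $S_i$, and two vertices outside $S_i$ are non-adjacent there; these $O(\nu)$-size \emph{anchor} sets are the only dense structure we must remember.

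Now process the layers in order. After layer $t$, a partial $\Delta$-temporal matching influences the future only through its set of \emph{active} vertices --- those hit by a chosen time edge at some time in $[t-\Delta+2,t]$ --- and, for each active $v$, its \emph{release time} $r(v)\in[t+1,t+\Delta-1]$, after which $v$ is free again. Since the chosen time edges whose times lie in a span-$(\Delta-1)$ window are pairwise vertex-disjoint and covered by some $S_i$, there are only $O(\nu)$ active vertices, and only $O(\nu)$ of them lie outside the relevant anchor set. A DP state therefore records: the number of chosen time edges; the release times of the active anchors (at most $\Delta$ possibilities each, hence $\Delta^{O(\nu)}$ jointly); and a description of the active \emph{non-anchor} vertices --- the dangerous ingredient, since a priori this is an arbitrary $O(\nu)$-subset of $V$ with release labels, which by itself would make the table of size $|V|^{O(\nu)}$. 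A transition at layer $t$ branches over which subset $A\subseteq S_{(\cdot)}$ of anchors is matched at time $t$ ($2^{O(\nu)}$ options), checks whether $E_t$ admits a matching that saturates exactly $A$ within $S_{(\cdot)}$ while avoiding all currently-blocked vertices, and, for every such matching, marks the new endpoints active with release time $t+\Delta-1$; for each resulting state the best count is retained.

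To keep the table at $\Delta^{O(\nu)}$ we apply representative families. Encode the active-non-anchor part of a state by its \emph{blocked footprint} $F\subseteq V\times[t+1,t+\Delta-1]$, i.e.\ the pairs $(w,s)$ with $w$ active non-anchor and $t+1\le s\le r(w)$; then $|F|\le p$ with $p=O(\nu\Delta)$. Any completion --- a valid future matching --- interacts with $F$ only through the endpoints of its time edges in layers $[t+1,t+\Delta-1]$, and these again form a vertex-disjoint family covered by an $O(\nu)$-size cover, hence a set $P$ of only $q=O(\nu)$ vertex appearances; crucially, the partial matching is consistent with the completion precisely when $F\cap P=\emptyset$. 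Thus, for each of the $\Delta^{O(\nu)}$ anchor-release profiles, the family of footprints stored under it may be replaced by a $q$-representative subfamily (keeping a maximum-count representative per disjointness type). Such a subfamily has size at most $\binom{p+q}{q}\le\bigl(e(p+q)/q\bigr)^{q}$, and with $p=O(\nu\Delta)$, $q=O(\nu)$ the ratio $(p+q)/q$ is $O(\Delta)$, so this bound is $\Delta^{O(\nu)}$ --- the $\nu$-factors cancel --- and the subfamily can be computed in $\Delta^{O(\nu)}$ time. Running the compression after every layer keeps the table at $\Delta^{O(\nu)}$ entries; the per-layer cost is $2^{O(\nu)}\cdot O(|E_t|)$ for branching and matching checks plus $\Delta^{O(\nu)}$ for the compression, for a total of $\Delta^{O(\nu)}\cdot|\TG|$, and the answer is the maximum count over the states surviving at time $\tau$.

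The main obstacle is getting the compression to actually come out to $\Delta^{O(\nu)}$ rather than, say, $\nu^{O(\nu)}$. One has to choose the footprint/completion encoding so that the representative bound evaluates to $\Delta^{O(\nu)}$ --- which is exactly why each blocked vertex is spread over its $O(\Delta)$ time stamps, inflating $p$ but leaving $q=O(\nu)$ small, so that $\binom{p+q}{q}$ collapses. One must then compute the $q$-representative subfamily while respecting that a layer-$t$ extension is a \emph{matching} of $E_t$, not an arbitrary vertex set, so the representative-set computation has to be carried out over the associated transversal (partition) matroid rather than in its plain set form, and without ever enumerating the possibly many matchings of $E_t$. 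Finally one must realise the covers $S_i$, the per-layer matching checks, and the table updates with no $\poly(|\TG|)$ overhead, so that the dependence on $|\TG|$ stays linear.
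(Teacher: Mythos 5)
Your high-level strategy matches the paper's (chop time into $\Delta$-bounded pieces, keep only a representative subfamily of partial matchings per piece, combine by DP), but your compression is genuinely different and, I think, sound: the paper encodes each blocked vertex by an $O(\log\Delta)$-size separator in a $\Delta$-postfix tree so that both the stored sets and the query sets have size $O(\nu\log\Delta)$ and $2^{O(\nu\log\Delta)}=\Delta^{O(\nu)}$; you instead let the stored footprints be large, $p=O(\nu\Delta)$, and exploit that the query side is only $q=O(\nu)$, so $\binom{p+q}{q}=\Delta^{O(\nu)}$ by the asymmetric binomial estimate. That observation would let one dispense with the postfix trees entirely, and the exchange argument you sketch (a completion meets the footprint in at most $2\nu$ vertex appearances, since its time edges within a span-$(\Delta-1)$ window are pairwise disjoint and covered by an $O(\nu)$-size cover) is the same in spirit as the paper's Lemma on $d$-completeness.

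The genuine gap is in the transition, and you name it yourself without resolving it: ``for every such matching, marks the new endpoints active'' requires enumerating the matchings of $E_t$ that saturate a prescribed anchor set $A$ while avoiding blocked vertices, and there can be $|E_t|^{\Theta(\nu)}$ of them (each anchor can be matched to any of its many non-anchor neighbours). Generating this family explicitly before compressing destroys the claimed bound --- the per-layer cost becomes $|\TG|^{O(\nu)}$ rather than $\Delta^{O(\nu)}\cdot|E_t|$ --- and the standard representative-family theorems only run in time linear in the size of the family they are \emph{given}. Your closing remark that the computation ``has to be carried out over the associated transversal (partition) matroid \dots without ever enumerating the possibly many matchings of $E_t$'' is a statement of the problem, not a solution. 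The paper closes exactly this hole with a specific tool (its Proposition on iterated unions, \cite[Proposition 4.8]{ROP-Arxiv18}): it builds one gadget set per \emph{time edge} of the window (a base family $\mathcal H$ of size $|\TE(\TG')|$), and computes a representative of the family of all disjoint unions of $\nu$ gadgets directly from $\mathcal H$ in time $2^{O(r)}\cdot|\mathcal H|$, never materialising the product family. To repair your argument you would need either to import that proposition (reformulating each layer-$t$ extension as a disjoint union of per-edge gadget sets, and extending the footprint to cover time $t$ itself so that within-layer vertex-disjointness is also expressed as set disjointness), or to interleave edge-by-edge insertion with re-compression inside each layer; as written, the step fails. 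The remaining issues (re-indexing anchors as the sliding cover $S_i$ changes, padding footprints to a uniform size $p$, empty $\Delta$-windows inflating $\tau$ beyond $|\TG|$) are real but routine bookkeeping.
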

The proof of \cref{thm:fpt-for-vc-delta} is deferred to the end of the section.
Formally, we solve the decision variant of \textsc{Temporal Matching} as it is defined in \cref{sec:intro}.
However, the algorithm actually computes the maximum size of a $\Delta$-temporal matching in a temporal graph and
with a straight-forward adjustment the algorithm can also output a $\Delta$-temporal matching of maximum size.

The algorithm behind \cref{thm:fpt-for-vc-delta} works in the same three major steps as the algorithm presented by \citet{MMNZZ}: 
\begin{enumerate}
\item\label{step1} Divide the temporal graph into disjoint $\Delta$-windows.
\item\label{step2} For each of these $\Delta$-windows compute a small family of $\Delta$-temporal matchings.
\item\label{step3} Based on the families of the Step \ref{step2},
		by dynamic programming compute
		the maximum size of a $\Delta$-temporal matching for the whole temporal graph.
\end{enumerate}
While Step~\ref{step1} is trivial and Step~\ref{step3} is similar to the algorithm of \citet{MMNZZ},
Step~\ref{step2} is where we provide new ideas leading to an improved overall running time.
Compared to the Step~\ref{step2} of \citet{MMNZZ}, we compute smaller families of $\Delta$-temporal matchings faster. 
In the next two subsections, we describe Step~\ref{step2} and Step~\ref{step3} in detail.
Afterwards, we put everything together and prove \cref{thm:fpt-for-vc-delta}.

\subsection{Step \ref{step2}: Families of $d$-complete $\Delta$-temporal matchings.}
In a nutshell, the core of Step \ref{step2} consists of an iterative computation of
a small (bounded by $\Delta^{O(\nu)}$) family of $\Delta$-temporal matchings 
for an arbitrary $\Delta$-window such that at least one of 
them is ``extendable'' to a maximum $\Delta$-temporal matching for the whole temporal graph.
	Let $\TGcompact$ be a temporal graph of lifetime $\lifetime$, 
	and let $d$ and $\Delta$ be two natural numbers such that $d\Delta \leq \tau$.
	A family $\mathcal M$ of $\Delta$-temporal matchings is~\emph{$d$-complete for~$\TG$}
	if for any $\Delta$-temporal matching $M$ of $\TG$ there is an $M' \in \mathcal M$ 
	such that $\big(M \setminus M{[\Delta(d-1)+1, \Delta d]}\big) \cup M'$ is a $\Delta$-temporal matching of $\TG$ of
	size at least~$|M|$.
	The central technical contribution of this paper is 
	a procedure to compute in~$\Delta^{O(\nu)} \cdot |\TE(\TG[\Delta(d-1)+1, \Delta d])|$ time such a \emph{$d$-complete} family $\mathcal M$ of size at most~$\Delta^{O(\nu)}$, where~$\nu$ is the \dvcn{} of $\TG$.
	Formally, we aim for the following theorem.
\begin{theorem}
		\label{lem:d-complete-family}
		Given  two natural numbers $d,\Delta$ 
		and a temporal graph $\TG$ of lifetime at least $d\Delta$ and \dvcn{} $\nu$, 
		one can compute in  $\Delta^{O(\nu)} \cdot |\TE(\TG[\Delta(d-1)+1,\Delta d])|$ time
		a family of $\Delta$-temporal matchings which is $d$-complete for $\TG$ 
		and of size at most~$\Delta^{O(\nu)}$.
\end{theorem}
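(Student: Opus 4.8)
The plan is to work entirely inside the $d$-th $\Delta$-window $W:=[\Delta(d-1)+1,\Delta d]$, abbreviating $\TG[W]:=\TG[\Delta(d-1)+1,\Delta d]$, and to build $\mathcal M$ from $\Delta$-temporal matchings that use only time steps in $W$. Two basic facts drive everything. First, any two time steps in $W$ differ by less than $\Delta$, so a $\Delta$-temporal matching restricted to $W$ is merely an ordinary matching of time edges (pairwise disjoint edges). Second, by the definition of the \dvcn{} there is a vertex set $S$ with $|S|\le\nu$ covering every edge appearing in $W$; I compute it by the standard branching FPT algorithm for \textsc{Vertex Cover} (after discarding vertices isolated in $\TG[W]$), well within the claimed time budget once we assume $\Delta\ge2$ (the case $\Delta=1$ is trivial and handled separately). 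Consequently every matching in $W$ has at most $\nu$ edges, all incident to $S$, and is determined by its \emph{profile} $\pi\colon S\to W\cup\{\bot\}$, which records the time of the matching edge at each $v\in S$ (or $\bot$ if $v$ is unmatched), together with the choice of the partner of each matched $v$ (another vertex of $S$, or a vertex outside $S$). There are only $(\Delta+1)^{\nu}=\Delta^{O(\nu)}$ profiles.

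Next I would pin down how a matching $M_W$ living in $W$ can conflict with the remainder $N:=M\setminus M[W]$ of a $\Delta$-temporal matching $M$ of $\TG$. A time edge of $M_W$ at time $t\in W$ and one of $N$ at time $t'$ conflict only if they share a vertex and $|t-t'|\le\Delta-1$, which --- as $t\in W$ --- forces $t'$ into one of the two length-$(\Delta-1)$ buffers $D_L:=[\Delta(d-2)+2,\Delta(d-1)]$ and $D_R:=[\Delta d+1,\Delta d+\Delta-1]$ flanking $W$ (intersected with $[\tau]$). Each buffer lies inside a single $\Delta$-window, so $N$ meets $D_L\cup D_R$ in at most $2\nu$ time edges and hence uses at most $4\nu$ vertices there; for each such vertex $z$, its (at most one) $N$-time-edge in $D_L$ forbids $M_W$ from using $z$ on a prefix of $W$, and its (at most one) $N$-time-edge in $D_R$ forbids a suffix. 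Recording the forbidden prefixes and suffixes at the $\le 4\nu$ such vertices $z\notin S$ as a set $Y$ of $O(\nu\Delta)$ forbidden \emph{appearances} $(z,t)$ with $z\notin S$, $t\in W$, I obtain the key equivalence: if $M'$ is a matching in $W$ with the same profile as $M_W$ and whose set of appearances at its non-$S$ endpoints is disjoint from $Y$, then $N\cup M'$ is again a $\Delta$-temporal matching --- conflicts at a shared $S$-vertex are avoided because $M'$ uses exactly the same $(v,t)$ with $v\in S$ as $M_W$, which $M_W$ already avoids since $M$ is $\Delta$-independent, while time edges of $N$ outside the buffers are $\ge\Delta$ away from all of $W$.

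Now the family is assembled per profile. For each profile $\pi$ and each $s\in\{0,\dots,\nu\}$ let $\mathcal C_{\pi,s}$ be the collection of all matchings in $W$ with profile $\pi$ and exactly $s$ edges, each represented by its ($\le\nu$) non-$S$ appearances; using a representative-set argument in the uniform matroid I compute a $q$-representative subfamily $\mathcal F_{\pi,s}\subseteq\mathcal C_{\pi,s}$ with $q=O(\nu\Delta)$, of size at most $\binom{\nu+O(\nu\Delta)}{\nu}=\Delta^{O(\nu)}$, which by definition contains, for every $Y$ with $|Y|\le q$, a matching whose non-$S$ appearance set avoids $Y$ whenever $\mathcal C_{\pi,s}$ does. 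Letting $\mathcal M$ collect, for each $\pi,s$ and each member of $\mathcal F_{\pi,s}$, the corresponding matching, we get $|\mathcal M|\le\Delta^{O(\nu)}$, and $d$-completeness is immediate from the equivalence of the previous paragraph: for an arbitrary $\Delta$-temporal matching $M$ of $\TG$ set $M_W:=M[W]$, $N:=M\setminus M_W$, let $\pi$ be the profile of $M_W$, $s:=|M_W|$, and let $Y$ be the $O(\nu\Delta)$ forbidden appearances induced by $N$; since $M$ is $\Delta$-independent, $M_W\in\mathcal C_{\pi,s}$ has non-$S$ appearances disjoint from $Y$, so $\mathcal F_{\pi,s}$ yields an $M'$ of profile $\pi$, size $s=|M_W|$, with non-$S$ appearances disjoint from $Y$, and then $(M\setminus M_W)\cup M'=N\cup M'$ is a $\Delta$-temporal matching of $\TG$ of size at least $|M|$.

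The main obstacle, and where the real work lies, is the running time: the collections $\mathcal C_{\pi,s}$ are far too large to enumerate, so their generation must be interleaved with the trimming so that the total effort stays $\Delta^{O(\nu)}\cdot|\TE(\TG[W])|$. I would do this iteratively: for a fixed profile, process its $\le\nu$ matched ``slots'' one at a time, at each step extending every partial matching kept so far by the (few) admissible ways to realise the next slot and immediately re-computing a $q$-representative family (enlarging $q$ to also absorb the non-$S$ vertices that future slots will occupy), so the family never exceeds $\Delta^{O(\nu)}$; the enabling observation is that a single slot only needs $O(\nu)$ candidate non-$S$ partners to be kept --- a trivial representative-set bound for a $1$-family --- which caps the per-step branching. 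Verifying that this iterative trimming preserves $d$-completeness, and carrying out the accounting that keeps every polynomial factor (including the \textsc{Vertex Cover} preprocessing, using $\Delta\ge2$) inside the bound, are the technical points I expect to dominate; the combinatorial core --- the buffer/prefix/suffix analysis and the fact that a profile together with $O(\nu\Delta)$ forbidden appearances captures everything the complement of $W$ can impose --- is comparatively clean.
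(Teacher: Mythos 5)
Your combinatorial skeleton matches the paper's --- disjoint $\Delta$-windows, the observation that only the two flanking buffers can block the window, at most $4\nu$ blocked vertices each forbidding a prefix or suffix of the window, and a max-weight representative family to keep only $\Delta^{O(\nu)}$ matchings --- but your encoding has a genuine gap exactly where the paper's main new idea lives. You represent each matching of the window by its set of non-$S$ \emph{appearances} and then run the iterative ``extend one slot, trim to a $q$-representative'' routine over disjoint unions of these singletons. Disjointness of appearance sets does not capture vertex-disjointness of the matching: two slots can pick the same non-$S$ partner $u$ at different times $t_1\neq t_2$ of the window, their appearance sets $\{(u,t_1)\}$ and $\{(u,t_2)\}$ are disjoint, yet the two edges share $u$ and are not $\Delta$-independent. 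Consequently the trimming step may discard a partial matching whose only valid completions reuse none of the surviving sets' vertices, and the representativity guarantee for $\mathcal C_{\pi,s}$ fails. This is fixable (put the vertex $u$ itself into each piece alongside $(u,t)$, so set-disjointness enforces vertex-disjointness), and it is precisely what the paper's construction achieves by letting each time edge contribute the whole root-to-leaf path of the $\Delta$-postfix tree: two time edges sharing an endpoint both contain that tree's root.

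The second, more fundamental problem is your choice of $q=O(\nu\Delta)$: you encode ``$z$ is blocked on a prefix/suffix of the window'' as up to $\Delta$ forbidden appearances per blocked vertex. The black-box iterative representative-family bound you would invoke (\cref{thm:matroid-tool}) runs in $2^{O((\alpha+\beta)\gamma)}\cdot t$ time, which with your parameters is $2^{O(\nu\Delta)}\cdot t$ --- exactly the running time the theorem is supposed to beat. Rescuing $\Delta^{O(\nu)}$ would require reworking the entire iterative computation with the finer $\binom{p+q}{p}$-type bounds (and your ``trivial representative-set bound for a $1$-family'' per slot does not do this: keeping $O(\nu)$ candidates per slot and taking products gives a family of size $\nu^{O(\nu)}$, not $\Delta^{O(\nu)}$, and does not compose into a representative of $\mathcal C_{\pi,s}$). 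The paper avoids all of this with the $\Delta$-postfix trees: a blocked prefix or suffix $[a,b]$ of the window is hit by an $[a,b]$-separator of only $\log(\Delta)+1$ tree vertices (\cref{lem:postfix-tree}), so the forbidden set has size $4\nu(\log\Delta+1)$ and each time edge's set has size $O(\log\Delta)$, making even the crude $2^{O(r)}$ bound evaluate to $\Delta^{O(\nu)}$. In short, your proposal correctly identifies what must be avoided but does not supply the compression device that makes the representative-family step affordable, and that device is the actual content of the theorem.
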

To show \cref{lem:d-complete-family}, we define a binary tree where the leaves have a fixed ordering.
An order of the leaves of a rooted tree is in \emph{post order} if 
a depth-first search traversal started at the root can visit the leaves in this order.
A \emph{postfix order} of the leaves of a rooted tree is an arbitrarily chosen fixed ordering which is in post order.

\begin{definition}
		Let $\Delta \in \mathbb N$.
		A \emph{$\Delta$-postfix tree} $T$ is a rooted full binary tree of depth at most $\log(\Delta)$ with $\Delta$ many leaves $v_i,i \in [\Delta]$, such that
		$v_1,v_2,\dots,v_\Delta$ is the postfix order of the leaves.
\end{definition}
Later, the algorithm will construct a $\Delta$-postfix tree $T_v$ for each vertex $v$ in the temporal graph.
Here, each leaf in $T_v$ represents a vertex appearance of $v$. 
For example, the leaf $v_t$ represents the vertex appearance $(v,t)$.
To encode that vertex appearances of $v$ are $\Delta$-blocked until (or since) some point in time,
we will use specific ``separators''.
\begin{definition}
		Let $T$ be a $\Delta$-postfix tree rooted at $v$ with leaves $v_1,v_2,\dots,v_\Delta$ in postfix order and let $[a,b] \subset [\Delta]$.
		Then,
		an \emph{$[a,b]$-separator} of $T$ is given by~$S := N_T(\bigcup_{i \in [\Delta] \setminus [a,b]}V(P_i))$,
		where $P_i$ is the $v_i$-$v$ path in $T$.
\end{definition}
Note that this definition becomes ambiguous if $\Delta=1$. 
We will handle the case where $\Delta=1$ in a different way. 
We now consider an example, depicted in \cref{fig:fig1}, to develop some intuition on
$\Delta$-postfix trees and $[a,b]$-separators.
In \cref{fig:fig1}, we see an auxiliary graph, 
which is constructed for some $\Delta$-window in a temporal graph with only two vertices $v$ and $u$, where $\Delta=8$.
For both vertices we constructed $\Delta$-postfix trees visualized by the dashed edges.
Moreover, there is an edge between $v$ and $u$ in the first, fifth, and sixth layer of this $\Delta$-window.
This is depicted by the straight edges $\{u_i,v_i\},i \in \{ 1,5,6\}$.
In our algorithm a path from~$u$~to~$v$ (the roots of our trees) represents a time edge between~$u$ and~$v$ in the $\Delta$-window.
The $[a,b]$-separators will represent that a vertex is blocked by some time edge outside of the $\Delta$-window.
In \cref{fig:fig1}, we look at the situation where the vertexr~$u$ (vertex~$v$) is blocked 
in the first three and last two (first four and last three) 
layers of the $\Delta$-window.
To encode these blocks, we use a $[1,3]$-separator (blue/star) and a $[7,8]$-separator (orange/triangle) in the $\Delta$-postfix tree of $u$ and
a $[1,4]$-separator (green/diamond) and a $[6,8]$-separator (red/square) in the $\Delta$-postfix tree of $v$.
Note that paths from $u$ to $v$ not intersecting one of the separator vertices correspond to time edges
which can be taken into a matching even if~$u$~and~$v$ are blocked by some other time edges 
outside of the $\Delta$-window, as depicted by the gray areas in \cref{fig:fig1}.

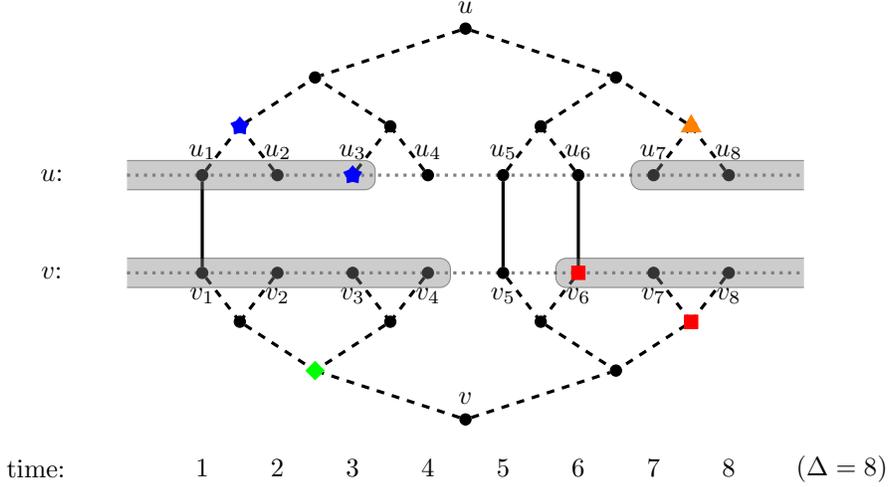
\begin{figure}
\begin{tikzpicture}[scale=1,yscale=1.3]
\foreach \i in {0,...,8} {
		\draw[edge,dotted,gray] (\i,1) -- (\i+1,1) {};
		\draw[edge,dotted,gray] (\i,2) -- (\i+1,2) {};
}
\foreach \i in {1,...,8} {
		\node[vertex,label=$u_{\i}$] (w\i) at (\i,2) {};
		\node[vertex,label=below:$v_{\i}$] (v\i) at (\i,1) {};
		\node at (\i,-1) {$\i$};
}
\foreach \i in {1,...,4} {
		\node[vertex] (w1\i) at (\i*2-0.5,2.5) {};
		\draw[edge,dashed] (\i*2-0.5,2.5) -- (\i*2-1,2) {};
		\draw[edge,dashed] (\i*2-0.5,2.5) -- (\i*2,2) {};
		\node[vertex] (v1\i) at (\i*2-0.5,0.5) {};
		\draw[edge,dashed] (\i*2-0.5,0.5) -- (\i*2-1,1) {};
		\draw[edge,dashed] (\i*2-0.5,0.5) -- (\i*2,1) {};
}
\foreach \i in {1,...,2} {
		\node[vertex] (w2\i) at (\i*4-1.5,3) {};
		\node[vertex] (v2\i) at (\i*4-1.5,0) {};
}
		\node[vertex,label=above:$u$] (w) at (4.5,3.5) {};
		\node[vertex,label=above:$v$] (v) at (4.5,-0.5) {};

		\draw[edge,dashed] (w) -- (w21) {};
		\draw[edge,dashed] (w) -- (w22) {};
		\draw[edge,dashed] (w11) -- (w21) {};
		\draw[edge,dashed] (w12) -- (w21) {};
		\draw[edge,dashed] (w13) -- (w22) {};
		\draw[edge,dashed] (w14) -- (w22) {};

		\draw[edge,dashed] (v) -- (v21) {};
		\draw[edge,dashed] (v) -- (v22) {};
		\draw[edge,dashed] (v11) -- (v21) {};
		\draw[edge,dashed] (v12) -- (v21) {};
		\draw[edge,dashed] (v13) -- (v22) {};
		\draw[edge,dashed] (v14) -- (v22) {};
		\node  at (0-1,1) {$v$:};
	 	\node  at (0-1,2) {$u$:};
		\node  at (-.21-1,-1) {time:};
		\node at (9.5,-1) {$(\Delta=8)$};
	  
      \draw[edge] (v1) -- (w1);

      \draw[edge] (v5) -- (w5);
      \draw[edge] (v6) -- (w6);

	  \draw[rounded corners,fill=gray,opacity=0.4] (9,2.15) to (6.7,2.15) to (6.7,1.85) to (9,1.85);
	  \draw[rounded corners,fill=gray,opacity=0.4] (0,2.15) to (3.3,2.15) to (3.3,1.85) to (0,1.85);
	  \draw[rounded corners,fill=gray,opacity=0.4] (0,1.15) to (4.3,1.15) to (4.3,0.85) to (0,0.85);
	  \draw[rounded corners,fill=gray,opacity=0.4] (9,1.15) to (5.7,1.15) to (5.7,0.85) to (9,0.85);

	  \node[vertex,rectangle,red,minimum size=5pt]  at (4*2-0.5,0.5) {};
	  \node[vertex,rectangle,red,minimum size=5pt]  at (3*2,1) {};

	  \node[vertex,diamond,green,minimum size=7pt]  at (2.5,-0) {};

	  \node[vertex,star,blue,minimum size=7pt]  at (1.5,2.5) {};
	  \node[vertex,star,blue,minimum size=7pt]  at (3,2) {};
	  
	  \node[vertex,orange,regular polygon,regular polygon sides=3,minimum size=7pt]  at (7.5,2.5) {};

\end{tikzpicture}
\caption{Illustration how the $\Delta$-postfix trees are used for a $\Delta$-window of a temporal graph with only two vertices $u$ and $v$. 
The edge $\{u,v\}$ is in the first, fifth, and sixth edge set of the $\Delta$-window.}
\label{fig:fig1}
\end{figure}

It is crucial for our algorithm that the $[a,b]$-separators are small in terms of~$\Delta$.
\begin{lemma}
		\label{lem:postfix-tree}
	Let $T$ be a $\Delta$-postfix tree $T$ rooted at $v$ with leaves $v_1,v_2,\dots,v_\Delta$ in postfix order,
	and let $S$ be the $[a,b]$-separator of $T$, where $[a,b] \subset [\Delta]$ with $a=1$ or $b=\Delta$.
	Then, 
	\begin{enumerate} 
	\item $|S| \leq \log(\Delta)$, and 
	\item the $v_i$-$v$ path in $T$ contains a vertex from $S$ if and only if $i \in [a,b]$.
	\end{enumerate} 
\end{lemma}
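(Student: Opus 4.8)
The plan is to work directly with the vertex set $A := \bigcup_{i \in [\Delta]\setminus[a,b]} V(P_i)$, so that $S = N_T(A)$ is exactly the ``fringe'' of $A$. By reversing the postfix order (which mirrors the tree, interchanging the two children at every internal node) we may assume $a = 1$; since $[a,b]\subset[\Delta]$ is a proper subset this forces $b < \Delta$, and the leaves outside $[a,b]$ are precisely $v_{b+1},\dots,v_\Delta$. First I would record two structural facts about $A$. It is the set of all ancestors in $T$ of the leaves $v_{b+1},\dots,v_\Delta$ (each $P_i$ is the root-path of $v_i$), so it is a subtree containing the root $v$ and is closed under taking parents. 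Hence a vertex of $N_T(A)$ lying outside $A$ cannot be the parent of an $A$-vertex (that parent would again lie in $A$), so it must be a child of an $A$-vertex; that is, $S = \{\, x \in V(T)\setminus A : \mathrm{parent}(x) \in A \,\}$.

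With this in hand, Statement~2 is quick. If $i \notin [a,b]$ then $V(P_i) \subseteq A$ by the definition of $A$, and since $S \cap A = \emptyset$ the path $P_i$ meets no vertex of $S$. Conversely, if $i \in [a,b]$ then $v_i \notin A$ (the only leaf lying on $P_j$ for $j \notin [a,b]$ is $v_j$ itself), while $v \in A$; walking along $P_i$ from $v_i$ towards the root, take the first vertex $w$ that lies in $A$ and let $w'$ be its predecessor on $P_i$. Then $w' \notin A$ and $\mathrm{parent}(w') = w \in A$, so $w' \in S \cap V(P_i)$, as required.

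Statement~1 is the heart of the matter. For a vertex $x$ of $T$ let $\mathrm{last}(x)$ be the largest index $i$ such that the leaf $v_i$ lies below $x$. Because the postfix order lists the leaves below the left child of a node before those below its right child, for an internal node $y$ with children $y_L$ (left) and $y_R$ (right) we have $\mathrm{last}(y) = \mathrm{last}(y_R)$; and, as $a = 1$, a vertex $x$ lies in $A$ if and only if $\mathrm{last}(x) \ge b+1$. Combining these: if $y \in A$ is internal then $\mathrm{last}(y_R) = \mathrm{last}(y) \ge b+1$, so $y_R \in A$; thus no right child can lie in $S$, and every $x \in S$ has the form $x = y_L$ for some internal $y \in A$ with $\mathrm{last}(y_L) \le b$. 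From $\mathrm{last}(y_L) \le b < b+1 \le \mathrm{last}(y_R)$ we see that the leaf $v_{b+1}$ lies below $y_R$, hence below $y$, so $y$ is a proper ancestor of $v_{b+1}$; since $y = \mathrm{parent}(x)$, the map $x \mapsto y$ is injective on $S$. Therefore $|S|$ is at most the number of proper ancestors of $v_{b+1}$, which equals the depth of $v_{b+1}$ and is at most $\log(\Delta)$ because $T$ has depth at most $\log(\Delta)$; in particular $|S| \le \log(\Delta)+1$.

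I expect the size bound of Statement~1 to be the only genuine obstacle. The crucial observation is that, once $a=1$, the right child of every $A$-vertex is again in $A$, so all the ``exits'' of the subtree $A$ occur as left children along the single root-to-$v_{b+1}$ path; this is precisely where the hypothesis ``$a=1$ or $b=\Delta$'' is used, since an interval $[a,b]$ avoiding both ends of $[\Delta]$ would contribute exits along two such paths and only yield a bound of roughly $2\log(\Delta)$. Statement~2, by contrast, is immediate and uses neither the hypothesis nor the balancedness of $T$. The one routine loose end is the opening reduction to $a=1$: reversing the postfix order corresponds to interchanging the two children at every internal node, which leaves the underlying tree, the paths $P_i$, and hence $S$ unchanged, while swapping the roles of the two ends of $[\Delta]$.
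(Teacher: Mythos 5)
Your proof is correct and follows essentially the same route as the paper: part~2 via the facts that $A:=\bigcup_{i\in[\Delta]\setminus[a,b]}V(P_i)$ contains the root and is disjoint from $S=N_T(A)$, and part~1 via the observation that all of $S$ hangs off a single root-to-leaf path. Your injection of $S$ into the proper ancestors of $v_{b+1}$ is just a more explicit (and slightly sharper, giving $|S|\le\log(\Delta)$) justification of the paper's terse claim that $S$ contains at most one vertex per depth level.
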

\begin{proof}
	Note that $[\Delta] \setminus [a,b]$ contains either $1$ or $\Delta$ but not both.
	We may assume without loss of generality that $\Delta \in [\Delta] \setminus [a,b]$ and hence $a=1$.
	Let $c = \min{([\Delta] \setminus [a,b])}$ and $P_c$ be the $v_c$-$v$ path in $T$.
	Observe that $S \subseteq N_T(V(P_{c}))$.
	Since (1) $P_c$ contains at most $\log(\Delta)$ vertices which are of degree at least two in $T$, and 
	(2) a vertex on $P_c$ has at most one neighbor not being in $V(P_c)$, we have that $|S| \leq \log(\Delta)$.
	The rest of the lemma follows simply by the fact that the root $v$ is in the set $\bigcup_{i \in [\Delta] \setminus [a,b]}V(P_i)$ 
	and that $\left(\bigcup_{i \in [\Delta] \setminus [a,b]}V(P_i)\right)\cap N(\bigcup_{i \in [\Delta] \setminus [a,b]}V(P_i)) = \emptyset$,
	where $P_i$ is the $v_i$-$v$ path in $T$.
\end{proof}

Our intermediate goal now is to compute a family of paths between roots of $\Delta$-postfix trees in the auxiliary graph 
such that if we are given a set of $[a,b]$-separators $S$, 
then the family shall contain a path between two roots which avoids the vertices in $S$ (if there is one). 
We will use \emph{representative families} for this.

Before we jump into the formal definition of representative families \cite{MONIEN1985239},
we build some intuition for representative families by considering an illustrative game played by Alice and Bob.
Bob has a set $U$ and a $p$-family $\mathcal S \subseteq 2^U$.
He shows $U$ and $\mathcal S$ once to Alice.
Afterwards, Bob puts $q$ elements from $U$ on the table $Y$ and asks Alice
whether there is a set in $\mathcal S$ which does not contain any element on the table $Y$.
Alice wins if and only if she can answer the question correctly.
The goal of Alice is to win this game while remembering as little as possible from $\mathcal S$.
This is represented by a set $\widehat{\mathcal S} \subseteq \mathcal S$.
Intuitively speaking, a representative family $\widehat{\mathcal S} \subseteq \mathcal S$ guarantees to
Alice that there is at least one set in~$\widehat{\mathcal S}$ 
which does not contain an element on the table $Y$ 
if there is a set in $\mathcal S$ 
which does not contain an element on the table $Y$.
Formally, we define representative families as follows.
\begin{definition}
		\label{def:repfam}
		Let $\mathcal S$ be a $p$-family
		and $\omega \colon \mathcal S \rightarrow \mathbb N$.
		A subfamily $\widehat{\mathcal S} \subseteq \mathcal S$
		is a \emph{max~$q$-representative} with respect to $\omega$
		if for each set $Y$ of size at most $q$ it holds true that
		if there is a set $X \in \mathcal S$ 
		with $X \cap Y = \emptyset$,
		then there is an $\widehat X \in \widehat{\mathcal S}$ 
		such that  $\widehat X \cap Y = \emptyset$ and $\omega(\widehat X) \geq \omega(X)$.
\end{definition}
Representative families are algorithmically useful
because there are FPT-algorithms for the parameter $p+q$ to compute $q$-representatives of $p$-families  such that the size of 
the $q$-representative only depends on $p+q$ \cite{MONIEN1985239,FLPS16}.

An algorithm of \citet{FLPS16} can be iteratively applied  
to show the following.
\begin{proposition}{\cite[Proposition 4.8]{ROP-Arxiv18}}
	\label{thm:matroid-tool}
  Let $\alpha$, $\beta$, and $\gamma$ be non-negative integers such that $r=(\alpha+\beta)\gamma \geq 1$, and let
  $\omega \colon U \to \mathbb N$~be a weight function.
  Furthermore, let $U$ be a set, $\mathcal H\subseteq 2^U$~be a $\gamma$-family of size~$t$ and let
  $\mathcal S = \{ S = \biguplus_{i=1}^\alpha H_i  
    \mid
	H_j \in \mathcal H\text{ for }j \in [\alpha]\}.
  $
  Then, we can compute a max $\beta \gamma$-representative~$\widehat{\mathcal S}$   of $\mathcal{S}$ with respect to~$\omega'$
  in $2^{O(r)}\cdot t$ time such that $|\widehat{\mathcal S}| \leq {{r}\choose {\alpha \gamma}}$, where $\omega'(X) := \sum_{x \in X} \omega(x)$.
  \end{proposition}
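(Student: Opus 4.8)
The plan is to construct the desired max $\beta\gamma$-representative of $\mathcal S$ \emph{incrementally}, building the $\alpha$-fold disjoint union one factor at a time and shrinking back to a representative family after each factor is added, so that every intermediate family has size $2^{O(r)}$. For $j \in \{0,1,\dots,\alpha\}$ let $\mathcal S^{(j)} := \{\biguplus_{i=1}^{j} H_i \mid H_1,\dots,H_j \in \mathcal H\}$ be the family of $j$-fold disjoint unions of members of $\mathcal H$ (so $\mathcal S^{(0)}=\{\emptyset\}$ and $\mathcal S^{(\alpha)}=\mathcal S$), and put $q_j := (\alpha+\beta-j)\gamma$. Crucially, $\mathcal S^{(j)}$ is a $(j\gamma)$-family and $j\gamma + q_j = r$ for all $j$; this complementarity inside a rank-$r$ universe is what keeps both the running time and the output size in check. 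The algorithm sets $\widehat{\mathcal S}_0 := \{\emptyset\}$ and, for $j=1,\dots,\alpha$, first forms the disjoint product $\mathcal S_j := \widehat{\mathcal S}_{j-1}\bullet \mathcal H$, where $\mathcal A \bullet \mathcal B := \{A\uplus B \mid A\in\mathcal A,\ B\in\mathcal B,\ A\cap B = \emptyset\}$, and then lets $\widehat{\mathcal S}_j$ be a max $q_j$-representative of $\mathcal S_j$ (with respect to $\omega'$) computed by the algorithm of~\citet{FLPS16} for the uniform matroid. The output is $\widehat{\mathcal S} := \widehat{\mathcal S}_\alpha$. Note that $\omega'$ is one fixed additive weight on all subsets of $U$, hence consistent across all of these families.

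The correctness rests on two simple lemmata I would prove first. \textbf{(i) Transitivity:} if $\widehat{\mathcal A}$ is a max $q$-representative of $\mathcal A$ and $\widehat{\widehat{\mathcal A}}$ is a max $q$-representative of $\widehat{\mathcal A}$, then $\widehat{\widehat{\mathcal A}}$ is a max $q$-representative of $\mathcal A$; this is immediate by chaining the defining property twice together with transitivity of ``$\geq$'' on $\omega'$-values. \textbf{(ii) One-sided extension:} if $\widehat{\mathcal A}$ is a max $(q+\gamma)$-representative of a $p$-family $\mathcal A$ and $\mathcal H$ is any $\gamma$-family, then $\widehat{\mathcal A}\bullet \mathcal H$ is a max $q$-representative of $\mathcal A\bullet \mathcal H$. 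The proof of (ii) is the only genuinely new ingredient: given a forbidden set $Y$ with $|Y|\le q$ and a set $A\uplus H \in \mathcal A\bullet\mathcal H$ disjoint from $Y$, apply the $(q+\gamma)$-representative property of $\widehat{\mathcal A}$ to the enlarged forbidden set $Y\cup H$ (which has size at most $q+\gamma$ and is disjoint from $A$) to obtain $\widehat A\in\widehat{\mathcal A}$ with $\widehat A\cap(Y\cup H)=\emptyset$ and $\omega'(\widehat A)\ge\omega'(A)$; then $\widehat A\uplus H \in \widehat{\mathcal A}\bullet\mathcal H$, it is disjoint from $Y$, and $\omega'(\widehat A\uplus H)=\omega'(\widehat A)+\omega'(H)\ge\omega'(A)+\omega'(H)=\omega'(A\uplus H)$ by additivity of $\omega'$ over the disjoint union.

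Correctness then follows by induction on $j$ with the invariant ``$\widehat{\mathcal S}_j\subseteq\mathcal S^{(j)}$ and $\widehat{\mathcal S}_j$ is a max $q_j$-representative of $\mathcal S^{(j)}$''. The base case $j=0$ is trivial. For the inductive step, $\widehat{\mathcal S}_{j-1}$ is a max $q_{j-1}$-representative of $\mathcal S^{(j-1)}$ and $q_{j-1}=q_j+\gamma$, so by lemma~(ii) the family $\mathcal S_j=\widehat{\mathcal S}_{j-1}\bullet\mathcal H$ is a max $q_j$-representative of $\mathcal S^{(j-1)}\bullet\mathcal H=\mathcal S^{(j)}$; since $\widehat{\mathcal S}_j$ is a max $q_j$-representative of $\mathcal S_j$, lemma~(i) makes it a max $q_j$-representative of $\mathcal S^{(j)}$, and the containment $\widehat{\mathcal S}_j\subseteq\mathcal S_j\subseteq\mathcal S^{(j)}$ is clear. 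Taking $j=\alpha$ shows $\widehat{\mathcal S}\subseteq\mathcal S$ is a max $\beta\gamma$-representative of $\mathcal S$. For the quantitative bounds: in iteration $j$ every set of $\mathcal S_j$ has size $j\gamma$ with $j\gamma+q_j=r$, so~\citet{FLPS16} returns $\widehat{\mathcal S}_j$ with $|\widehat{\mathcal S}_j|\le\binom{r}{j\gamma}$ in time $2^{O(r)}\cdot|\mathcal S_j|$; moreover $|\mathcal S_j|\le|\widehat{\mathcal S}_{j-1}|\cdot|\mathcal H|\le\binom{r}{(j-1)\gamma}\cdot t\le 2^{r}t$, and forming $\mathcal S_j$ (disjointness tests between sets of at most $r$ elements) likewise costs $2^{O(r)}t$. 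Since $r\ge 1$ forces $\gamma\ge 1$ and hence $\alpha\le\alpha\gamma\le r$, there are at most $r$ iterations, giving total time $2^{O(r)}t$ and final size $|\widehat{\mathcal S}|\le\binom{r}{\alpha\gamma}$; the degenerate cases $\alpha=0$ and $\mathcal H=\emptyset$ are immediate.

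I expect the only real obstacle to be getting lemma~(ii) and the accompanying budget accounting exactly right: one must transfer precisely $\gamma$ units of slack from the representative parameter into the forbidden set at each step and verify that the working-set size $j\gamma$ and the residual parameter $q_j$ stay complementary to the fixed rank $r$ throughout. Everything else---transitivity, the subset bookkeeping, invoking~\citet{FLPS16} as a black box for the uniform matroid, and summing the per-iteration costs---is routine once that invariant has been identified.
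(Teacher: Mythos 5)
Your proposal is correct, and it is essentially the intended argument: the paper does not prove this proposition itself but imports it from \citet{ROP-Arxiv18}, noting only that it follows by iteratively applying the algorithm of \citet{FLPS16}, which is precisely your scheme of alternating a disjoint-product step with a shrink-to-representative step while keeping the working-set size $j\gamma$ and the residual budget $q_j=(\alpha+\beta-j)\gamma$ complementary to the rank $r$. Your one-sided extension lemma (absorbing the $\gamma$ new elements into the forbidden set) and the transitivity lemma are the standard ingredients, and the size and running-time accounting matches the stated bounds.
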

  Note that \citet{ROP-Arxiv18} actually showed a more general version of \cref{thm:matroid-tool}. 
  However, for the following algorithm, we only need \cref{thm:matroid-tool} (that is, Proposition 4.8 of \citet{ROP-Arxiv18}
  in the special case of a uniform matroid represented over a large enough prime field).
  
  Now, we can describe the algorithm behind \cref{lem:d-complete-family} in detail.
\begin{algorithm}[Algorithm behind \cref{lem:d-complete-family}]
		\label{const:family}
		Let $\TGcompact$ be a temporal graph of 
		lifetime $\lifetime$, and let
		$d$ and $\Delta>1$ be two natural numbers such that~$d\Delta \leq \tau$.
		Furthermore let $\TG' := \TG{[\Delta(d-1)+1, \Delta d]}$, 
		and let $\nu$ be the \dvcn{} of $\TG$.

		\begin{enumerate}[(i)]
				\item 
						For each vertex $u \in V$, 
						we construct the $\Delta$-postfix tree $T_u$ with $\Delta$ many leaves.
						These trees have pair-wise disjoint vertex sets.
		The root of $T_u$ is~$u$ and
		the leaves in postfix order are 
		$u_1,u_2,\dots,u_\Delta$.
		
		\item
		Construct a $(2\log \Delta+3)$-family $\mathcal H := \mathcal H_E \cup \mathcal H_D$ such that
			 $\mathcal H_D$ contains~$\nu$ pairwise disjoint sets of fresh vertices, and
			 $\mathcal H_E := \{ E_{(\{u,w\},t)} \mid (\{u,w\},t)$ is time edge of $\TG' \}$,
					where $E_{(\{u,w\},t)} :=$
					\begin{align*}
							 \bigcup_{y \in \{u,w\}} \left\{ x \in V(T_y) \mid x \text{ is on the $y$-$y_{t-\Delta(d-1)}$-path in } T_y \}
							\cup \{ (\{u,w\},t) \right\},
					\end{align*}
					for all time edges $(\{u,w\},t)$ of $\TG'$.
			\item Let $\omega : 2^U \to \mathbb N$ with $\omega(X) := |X \cap \TE(\TG')|$, for all $X \in 2^U$ be a weight function,
				where~$U := \bigcup_{A \in \mathcal H} A$.
		\item Compute the max $(4\nu\log{\Delta})$-representative $\widehat{\mathcal S}$ of 
		\begin{align*}
				\mathcal S := \left\{ \biguplus_{i=1}^{\nu} H_i \mid H_i \in \mathcal H, \text{ for } i \in [\nu] \right\}
		\end{align*}
				with respect to $\omega$ (using \cref{thm:matroid-tool}).
		\item Output $\mathcal M := \{ S \cap \mathcal E(\TG') \mid S \in \widehat{\mathcal S}   \}$.
		\hfill$\diamond$
\end{enumerate}
\end{algorithm}

Towards the correctness of \cref{const:family}, we observe the following.
\begin{lemma}
		\label{lem:matchings-sets}
		Let $\mathcal S$, $\TG'$, and $\nu$ be defined as in \cref{const:family} for some temporal graph $\TG$ and $d, \Delta \in \mathbb N$.
		Then,
		$M$ is a $\Delta$-temporal matching in $\TG'$ 
		if and only if there is an $S \in \mathcal S$ such that 
		$M =  \TE(\TG') \cap S$ and $\omega(S) = |M|$.
\end{lemma}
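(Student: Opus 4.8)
The plan is to prove both directions by carefully unpacking the definitions of $\mathcal{S}$, the family $\mathcal{H}$, and the weight function $\omega$. The key conceptual point is the correspondence between a time edge $(\{u,w\},t)$ of $\TG'$ and the set $E_{(\{u,w\},t)} \in \mathcal{H}_E$, which consists of the $u$-$u_{t-\Delta(d-1)}$-path in $T_u$, the $w$-$w_{t-\Delta(d-1)}$-path in $T_w$, and the marker element $(\{u,w\},t)$ itself. Two such edge sets $E_{(\{u,w\},t)}$ and $E_{(\{u',w'\},t')}$ are disjoint exactly when the two corresponding time edges are $\Delta$-independent: if they share an endpoint, say $u = u'$, then since $t \neq t'$ the two root-to-leaf paths in $T_u$ diverge, and by the postfix-order structure of the $\Delta$-postfix tree they share a vertex precisely when the leaf indices $t-\Delta(d-1)$ and $t'-\Delta(d-1)$ are ``close'' in the tree metric; one must check that this happens exactly when $|t-t'| \le \Delta-1$, i.e.\ when the time edges are \emph{not} $\Delta$-independent. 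I expect this tree-combinatorics step---establishing that the leaf paths $P_t, P_{t'}$ in a $\Delta$-postfix tree intersect iff $|t-t'|\le\Delta-1$---to be the main obstacle, and it may rely on the same structural observation used in the proof of \cref{lem:postfix-tree}. The dummy sets in $\mathcal{H}_D$ are pairwise disjoint and disjoint from everything in $\mathcal{H}_E$, so they never cause a conflict.

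For the forward direction, suppose $M = \{(e_1,t_1),\dots,(e_m,t_m)\}$ is a $\Delta$-temporal matching in $\TG'$ with $m \le \nu$ (we may assume $m \le \nu$ since a $\Delta$-temporal matching of $\TG'$ uses time edges all lying inside a single $\Delta$-window, which by definition of the \dvcn{} can have at most $\nu$ pairwise $\Delta$-independent time edges). Set $H_i := E_{(e_i,t_i)}$ for $i \in [m]$, and pad with $m-\nu$... rather $\nu - m$ distinct sets from $\mathcal{H}_D$ to obtain $H_{m+1},\dots,H_\nu$; since $M$ is a $\Delta$-temporal matching, pairwise $\Delta$-independence together with the tree-intersection fact gives that the $H_i$ are pairwise disjoint, so $S := \biguplus_{i=1}^{\nu} H_i$ is a genuine disjoint union and hence $S \in \mathcal{S}$. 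It remains to verify $\TE(\TG') \cap S = M$ and $\omega(S) = |M|$: the only elements of $S$ lying in $\TE(\TG')$ are the markers $(e_i,t_i)$, because all other elements are tree-vertices or dummy vertices, none of which belong to $\TE(\TG')$; this gives $\TE(\TG')\cap S = M$, and since $\omega(S) = |S \cap \TE(\TG')|$ we get $\omega(S) = |M|$.

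For the reverse direction, take $S = \biguplus_{i=1}^{\nu} H_i \in \mathcal{S}$ with $M = \TE(\TG') \cap S$ and $\omega(S) = |M|$. Let $I \subseteq [\nu]$ be the indices with $H_i \in \mathcal{H}_E$, say $H_i = E_{(e_i,t_i)}$ for $i \in I$; the remaining $H_i$ lie in $\mathcal{H}_D$ and contribute nothing to $\TE(\TG')$. Then $M = \{(e_i,t_i) : i \in I\}$, and because the $H_i$ are pairwise disjoint, for distinct $i,j \in I$ the sets $E_{(e_i,t_i)}$ and $E_{(e_j,t_j)}$ are disjoint, which by the tree-intersection fact forces $(e_i,t_i)$ and $(e_j,t_j)$ to be $\Delta$-independent; hence $M$ is a $\Delta$-temporal matching in $\TG'$. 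Finally $\omega(S) = |S \cap \TE(\TG')| = |M|$ holds by definition, so the stated equalities are exactly the ones we assumed, completing the equivalence.
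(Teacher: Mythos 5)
Your overall skeleton matches the paper's: build $S$ from $M$ by padding with dummy sets from $\mathcal H_D$ in one direction, and read $M$ off $S$ in the other. But there is a genuine gap at exactly the step you flag as the ``main obstacle.'' You claim that when two time edges share an endpoint $u=u'$, the two root-to-leaf paths in $T_u$ ``share a vertex precisely when the leaf indices are close in the tree metric,'' and you defer checking that this happens exactly when $|t-t'|\le\Delta-1$. That is not the right picture: \emph{any} two root-to-leaf paths in $T_u$ intersect, because both contain the root $u$. There is no tree-combinatorial condition to verify, and the biconditional you propose is not a meaningful general fact about $\Delta$-postfix trees (inside a single tree it holds only vacuously, since any two leaf indices already differ by at most $\Delta-1$). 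An attempt to actually prove your ``tree-intersection fact'' as stated would stall.

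The correct and much simpler argument, which is the one the paper uses, has two ingredients. First, all time edges of $\TG'$ lie in $\Delta$ consecutive layers, so any two of them automatically satisfy $|t-t'|\le\Delta-1$; hence $\Delta$-independence \emph{within} $\TG'$ is equivalent to the edges being vertex-disjoint. Second, $E_{(e,t)}\cap E_{(e',t')}\neq\emptyset$ if and only if $e\cap e'\neq\emptyset$: a shared vertex $v$ places the root of $T_v$ in both sets, while vertex-disjoint edges give sets contained in disjoint trees (plus distinct marker elements). With these two observations, both directions go through as you outline: pairwise disjointness of the $H_i$ forces vertex-disjointness and hence $\Delta$-independence; conversely a $\Delta$-temporal matching in $\TG'$ consists of pairwise vertex-disjoint edges (which also yields $|M|\le\nu$ via the vertex cover, as you note), so the sets $E_{(e,t)}$ are pairwise disjoint and padding with $\nu-|M|$ dummy sets gives an element of $\mathcal S$. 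The remaining bookkeeping in your write-up (the markers being the only elements of $S$ in $\TE(\TG')$, and $\omega(S)=|M|$) is fine.
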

\begin{proof}
		($\Leftarrow$): Let $S \in \mathcal S$ and set $M = \TE(\TG') \cap S$.
		Clearly, $\omega(S) = |M|$ and for two distinct time edges $(e,t),(e',t') \in M$ we have that $e \cap e' = \emptyset$,
		because otherwise~$E_{(e,t)} \cap E_{(e',t')} \not= \emptyset$ and hence $S \not \in \mathcal S$.

		($\Rightarrow$): Let $M$ be a $\Delta$-temporal matching in $\TG'$.
		Since all time edges of $\TG'$ are in $\Delta$ consecutive time steps, 
		we know that for all $(e,t),(e',t') \in M$ we have that $e \cap e' = \emptyset$.
		Hence, $|M| \leq \nu$ and $E_{(e,t)} \cap E_{(e',t')} = \emptyset$.
		Thus, $S := \biguplus_{(e,t) \in M} E_{(e,t)} \uplus \biguplus_{i=1}^{\nu-|M|} D_i \in \mathcal S$ and $\omega(S) = |M|$, 
		where  $D_1,\dots,D_{\nu-|M|} \in \mathcal H_D$ are pairwise disjoint.
\end{proof}
We now show the correctness of \cref{const:family}.
\begin{lemma}
		\label{lem:d-complete-family-correct}
		Let $\mathcal M$, $\TG'$, and $\nu$ be defined as in \cref{const:family} for some temporal graph $\TG$, and $d, \Delta \in \mathbb N$.
		Then, $\mathcal M$ is a $d$-complete family of $\Delta$-temporal matchings for $\TG'$.
\end{lemma}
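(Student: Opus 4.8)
The plan is to verify the defining property of a $d$-complete family directly. Fix an arbitrary $\Delta$-temporal matching $M$ of $\TG$, write $M_d := M[\Delta(d-1)+1,\Delta d]$ for its restriction to the $d$-th $\Delta$-window and $M_{\mathrm{out}} := M \setminus M_d$ for the rest; we must produce some $M' \in \mathcal M$ such that $M_{\mathrm{out}} \cup M'$ is a $\Delta$-temporal matching of $\TG$ with $|M_{\mathrm{out}}\cup M'| \ge |M|$ (a $\Delta$-temporal matching of $\TG'$ is the special case $M_{\mathrm{out}} = \emptyset$). The idea is to record, as one small forbidden vertex set $Y \subseteq U$, exactly which vertex appearances inside the $d$-th window are $\Delta$-blocked by $M_{\mathrm{out}}$, to locate $M_d$ inside $\mathcal S$ as a set disjoint from $Y$, and then to let the representative family $\widehat{\mathcal S}$ hand us a set of at least the same weight that still avoids $Y$.

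First I would describe $Y$. Identify leaf $u_i$ of $T_u$ with the vertex appearance $(u,\Delta(d-1)+i)$, $i \in [\Delta]$. Every time edge of $M_{\mathrm{out}}$ sits at a time step $\le \Delta(d-1)$ or $\ge \Delta d + 1$, so for a fixed vertex $u$ the set of positions $i$ for which $(u,\Delta(d-1)+i)$ is $\Delta$-blocked by $M_{\mathrm{out}}$ is a prefix $[1,a_u]$ (contributed by edges left of the window) together with a suffix $[b_u,\Delta]$ (contributed by edges right of it), where $a_u \le \Delta-1$ and $b_u \ge 2$, with the convention $a_u = 0$, resp. $b_u = \Delta+1$, if there is no such edge. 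Only $M_{\mathrm{out}}$-edges lying in the $\Delta-1$ time steps immediately preceding (resp. following) the window can block inside it; these edges are pairwise vertex-disjoint and lie in $\Delta$ consecutive layers, so by the \dvcn{} there are at most $\nu$ of them on each side, and hence at most $2\nu$ vertices on each side of the window carry a non-trivial prefix or suffix. For each such vertex $u$ let $Y_u$ be the union of the $[1,a_u]$-separator and the $[b_u,\Delta]$-separator of $T_u$ (each is a legitimate $[a,b]$-separator with $a=1$, resp. $b=\Delta$, and $[a,b]\subsetneq[\Delta]$, so \cref{lem:postfix-tree} applies), and set $Y := \bigcup_u Y_u$. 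By \cref{lem:postfix-tree}(1) each separator has at most $\log\Delta+1$ vertices, so $|Y| \le 4\nu(\log\Delta+1)$, which is precisely the representativeness parameter of \cref{const:family}.

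Next I would locate $M_d$ in $\mathcal S$ and apply representativeness. By \cref{lem:matchings-sets} there is $S \in \mathcal S$ with $M_d = \TE(\TG')\cap S$ and $\omega(S) = |M_d|$, and from that lemma's proof we may take $S = \bigl(\biguplus_{(e,t)\in M_d}E_{(e,t)}\bigr)\uplus\bigl(\biguplus_j D_j\bigr)$ with the $D_j\in\mathcal H_D$ fresh. The $D_j$ are disjoint from $Y$ by construction; and for $(e,t)\in M_d$ and an endpoint $u$ of $e$, the relevant part of $E_{(e,t)}$ is the $u$-$u_{t-\Delta(d-1)}$ path in $T_u$. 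Since $M$ is a $\Delta$-temporal matching, $(e,t)$ is $\Delta$-independent from every time edge of $M_{\mathrm{out}}$, so the position $i := t-\Delta(d-1)$ is not $\Delta$-blocked at $u$ by $M_{\mathrm{out}}$, i.e. $i\notin[1,a_u]\cup[b_u,\Delta]$; by \cref{lem:postfix-tree}(2) that path then misses both separators making up $Y_u$, and since separator vertices are never time-edge tokens we get $E_{(e,t)}\cap Y=\emptyset$. Hence $S\cap Y=\emptyset$. As $\widehat{\mathcal S}$ is a max $4\nu(\log\Delta+1)$-representative with respect to $\omega$ and $|Y|\le 4\nu(\log\Delta+1)$, there is $\widehat S\in\widehat{\mathcal S}$ with $\widehat S\cap Y=\emptyset$ and $\omega(\widehat S)\ge\omega(S)=|M_d|$. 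Put $M':=\TE(\TG')\cap\widehat S\in\mathcal M$; by \cref{lem:matchings-sets} (applied to $\widehat S\in\mathcal S$, noting $\omega(\widehat S)=|M'|$ automatically) $M'$ is a $\Delta$-temporal matching of $\TG'$, and $|M'|=\omega(\widehat S)\ge|M_d|$.

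Finally I would check that $M_{\mathrm{out}}\cup M'$ is a $\Delta$-temporal matching of $\TG$. Pairs within $M_{\mathrm{out}}$ are $\Delta$-independent since $M$ is, pairs within $M'$ since $M'$ is a $\Delta$-temporal matching; so take $(e,t)\in M_{\mathrm{out}}$ and $(e',t')\in M'$ sharing a vertex $u$. Because $(e',t')\in M'\subseteq\TE(\TG')$ is a token of $\widehat S$ and $E_{(e',t')}$ is the unique member of $\mathcal H$ containing this token, one of the $\nu$ summands of $\widehat S$ equals $E_{(e',t')}$; hence the $u$-$u_{t'-\Delta(d-1)}$ path in $T_u$ is contained in $\widehat S$, and therefore disjoint from $Y$, so by \cref{lem:postfix-tree}(2) we get $t'-\Delta(d-1)\notin[1,a_u]\cup[b_u,\Delta]$, i.e. $(u,t')$ is not $\Delta$-blocked by $M_{\mathrm{out}}$, in particular not by $(e,t)$, so $|t-t'|\ge\Delta$ and the pair is $\Delta$-independent. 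Thus $M_{\mathrm{out}}\cup M'$ is a $\Delta$-temporal matching of $\TG$, and since $M_{\mathrm{out}}$ and $M'$ occupy disjoint time ranges, $|M_{\mathrm{out}}\cup M'| = |M_{\mathrm{out}}|+|M'| \ge |M_{\mathrm{out}}|+|M_d| = |M|$, as required. The main obstacle is the bookkeeping in the two middle steps: getting the prefix-plus-suffix shape of the blocked positions right so that \cref{lem:postfix-tree} is applicable (this is exactly why the separators are anchored at $1$ or $\Delta$), bounding $|Y|$ by $4\nu(\log\Delta+1)$ via the \dvcn{}, and the observation that a time-edge token lying in $\widehat S$ drags the whole set $E_{(e',t')}$, and hence the tree paths at both endpoints, into $\widehat S$, which is what lets the separator argument run in the verification step.
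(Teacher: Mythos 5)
Your proposal is correct and follows essentially the same route as the paper's proof: translate the window-restricted matching into a set $S\in\mathcal S$ via \cref{lem:d-complete-family}'s auxiliary \cref{lem:matchings-sets}, encode the at most $4\nu$ prefix/suffix blocking constraints from outside the window as a separator set $Y$ of size at most $4\nu(\log\Delta+1)$, invoke the max $4\nu(\log\Delta+1)$-representative property to obtain $\widehat S$ avoiding $Y$ with no smaller weight, and verify the recombined set is a $\Delta$-temporal matching via \cref{lem:postfix-tree}. The only differences are cosmetic (you argue the final compatibility check directly rather than by contradiction, and spell out the prefix-plus-suffix structure that the paper packs into its set $B$).
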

\begin{proof}
		By \cref{lem:matchings-sets} together with Step (iv) and (v) of \cref{const:family},
		the family~$\mathcal M$ only contains $\Delta$-temporal matchings in $\TG'$.
		To show that $\mathcal M$ is $d$-complete, 
		let $M$ be a $\Delta$-temporal matching for the whole temporal graph $\TG$.
		Then, $M' := M[\Delta(d-1)+1, \Delta d]$ is the $\Delta$-temporal matching in $\TG'$ which is included in $M$.
		If $d>1$, then let $M^- := M[\Delta(d-2)+1, \Delta (d-1)]$ or otherwise $M^- := \emptyset$.
		If $d< \nicefrac{\lifetime}{\Delta}$, then let $M^+ := M[\Delta d+1, \Delta (d+1)]$ or otherwise $M^+ := \emptyset$.
		Observe that $|M^-|,|M'|,|M^+| \leq \nu$ and 
		that vertex appearances in $\TG'$ which are incident to an arbitrary time edge
		can only be $\Delta$-blocked by time edges in $M^- \cup M^+$.
		Hence, there at most $4\nu$ vertices for which some vertex appearances in~$\TG'$ are $\Delta$-blocked by time edges in $M^- \cup M^+$.
		Let
		\begin{align*}
				B :=\ &\{ (v,[1,t-\Delta d]) \mid (e,t) \in M^-, v \in e \} \cup \\
				  &  \{ (v,[t+1 - \Delta d, \Delta]) \mid (e,t) \in M^+, v \in e \}.
		\end{align*}
		Thus, a vertex appearance $v_t$ from $\TG'$ is $\Delta$-blocked by some time edge in $M^- \cup M^+$ if and only if
		there is a $(v,[a,b]) \in B$ with $t - \Delta(d-1) \in [a, b]$.
		Now, let~$Y := \bigcup_{(v,[a,b]) \in B} S_{(v,[a,b])}$, 
		where $S_{(v,[a,b])}$ is an $[a,b]$-separator in the $\Delta$-postfix tree~$T_v$.
		Furthermore, by \cref{lem:matchings-sets}, there is an $S \in \mathcal S$ 
		such that $M' =  \TE(\TG') \cap S$ and $\omega(S) = |M'|$.

		We now show that $S \cap Y = \emptyset$. 
		Assume towards a contradiction that $S \cap Y \not= \emptyset$.
		Hence, there is an $(e,t) \in M'$ such that there is a  $u \in E_{(e,t)} \cap Y$.
		Since $u \in Y$, there is a $v \in e$ such that $u \in V(T_v)$, 
		and a $(v,[a,b]) \in B$ such that $u \in S_{(v,[a,b])}$.
		From $u \in E_{(e,t)}$ we know that $u$ is on the $v$-$v_{t-\Delta(d-1)}$ path in $T_v$.
		Hence, by \cref{lem:postfix-tree}, $t \in [a,b]$ and thus there is a time edge $(e',t') \in M^- \cup M^+$ which is not $\Delta$-independent with $(e,t)$.
		This contradicts $M$ being a $\Delta$-temporal matching. 
		Thus, $S \cap Y = \emptyset$.

		Since $S \cap Y = \emptyset$, $S \in \mathcal S$, $|Y| \leq 4\nu\log \Delta$, and
		$\widehat{\mathcal S}$ is a max $(4\nu\log \Delta)$-representative of $\mathcal S$ with respect to $\omega$,
		we know that there is an $\widehat S \in \widehat{\mathcal S}$ 
		such that~$\widehat S \cap Y = \emptyset$ and $\omega(\widehat S) \geq \omega(S)$.
		By the construction of $\mathcal M$ in \cref{const:family}, 
		and by \cref{lem:matchings-sets}, we know that there is an $\widehat M \in \mathcal M$
		such that $\widehat S \cap E(\TG') = \widehat M$ and $|\widehat M| = \omega(\widehat S) \geq \omega(S) = |M'|$.
		Hence, $|(M \setminus M') \cup \widehat M| \geq |M|$.

		We now show that $(M \setminus M') \cup \widehat M$ is a $\Delta$-temporal matching.
		Suppose not.
		Then there are time edges $(e,t) \in M^- \cup M^+$ and $(\widehat e, \widehat t) \in \widehat M$
		with $v \in e \cap \widehat e$ such that the vertex appearance $v_{\widehat t}$ is $\Delta$-blocked by $(e,t)$.
		Hence, there is a~$(v,[a,b]) \in B$ with $\widehat t - \Delta(d-1) \in [a,b]$.
		By \cref{lem:postfix-tree}, this contradicts $\widehat S \cap Y = \emptyset$.
		Hence, $(M \setminus M') \cup \widehat M$ is a $\Delta$-temporal matching and thus $\mathcal M$ is~$d$-complete.
\end{proof}
The running time of the dynamic program defined in \eqref{eq:dp} will be discussed directly in the following proof of \cref{lem:d-complete-family}.
\begin{proof}[Proof of \cref{lem:d-complete-family}]
		By \cref{lem:d-complete-family-correct}, we can use \cref{const:family} 
		to compute a $d$-complete family $\mathcal M$ of $\Delta$-temporal matchings in $\TG[\Delta(d-1)+1,\Delta d]$.
		It is easy to verify that we can compute $\mathcal H$ in $O\left((\nu + |\TE(\TG[\Delta(d-1)+1,\Delta d])|)\log \Delta \right)$ time (by ignoring isolated vertices).
		Finally, we compute $\widehat{\mathcal S}$ with \cref{thm:matroid-tool} in $2^{O(\nu \cdot \log{\Delta})}\cdot |\TE(\TG[\Delta(d-1)+1,\Delta d])|$ time,
		by setting 
		$\alpha$ to $\nu$, 
		$\beta$ to $2\nu$, and 
		$\gamma$ to $2 \log{\Delta} + 3$.
		Note that the computed family of \cref{thm:matroid-tool} is even a $(4\nu\log{\Delta}+6\nu)$-representative and 
		hence suffices as $(4\nu\log{\Delta})$-representative, see \cref{def:repfam}.
		By \cref{thm:matroid-tool} the size of $\widehat{\mathcal S}$ is at most $2^{O(\nu \cdot \log{\Delta})} = \Delta^{O(\nu)}$.
		Hence, we end up with an overall running time of~$\Delta^{O(\nu)}\cdot |\TE(\TG[\Delta(d-1)+1,\Delta d])|$.
\end{proof}

\subsection{Step \ref{step3}: The dynamic program}
In this section we describe Step \ref{step3} of the algorithm behind \cref{thm:fpt-for-vc-delta}, see \cref{sec:algo}.

Let $\TGcompact$ be a temporal graph such that $\lifetime$ is a multiple of $\Delta \in \mathbb N$.
Assume that we already computed for all $d \in [\nicefrac{\lifetime}{\Delta}]$ a family $\mathcal M_d$ of $\Delta$-temporal matchings 
which is $d$-complete for $\TG$.

For all $i \in [\nicefrac{\lifetime}{\Delta}]$ 
and $M \in \mathcal M_i$, for $i > 1$ let
\begin{equation}
		\label{eq:dp}
		\begin{split}
		&T_i[M] := \left\{ |M| + T_{i-1}[M'] \  \middle\vert \
		M'\in \mathcal M_{i-1}, M \cup M' \text{ is a $\Delta$-temporal matching }
\right\},\\
		&\text{and } T_1[M] := |M|.
		\end{split}
\end{equation}
Towards the correctness of the dynamic program specified in \eqref{eq:dp}, 
we observe the following.

\begin{lemma}
		\label{lem:dp-correctness}
		There is a $\Delta$-temporal matching of size at least $k$ in $\TG$
		if and only if
		$\max_{M \in \mathcal M_\frac{\lifetime}{\Delta}}  T_{\frac{\lifetime}{\Delta}}[M] \geq k$.
\end{lemma}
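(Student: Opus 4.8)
The statement to prove is \cref{lem:dp-correctness}, the correctness of the dynamic program in \eqref{eq:dp}. The natural approach is to prove, by induction on $i \in [\nicefrac{\lifetime}{\Delta}]$, the invariant
\[
 T_i[M] = \max\Big\{ |M| + |N| \;\Big|\; N \text{ is a }\Delta\text{-temporal matching of }\TG[1,\Delta(i-1)] \text{ such that } M \cup N \text{ is a }\Delta\text{-temporal matching}\Big\},
\]
together with a complementary statement in the other direction: for every $\Delta$-temporal matching $N$ of $\TG[1,\Delta i]$ there is an $M \in \mathcal M_i$ with $T_i[M] \ge |N|$, where $\mathcal M_i$ is $i$-complete for $\TG$. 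The theorem then follows by taking $i = \nicefrac{\lifetime}{\Delta}$ and using that $\TG[1,\lifetime] = \TG$.

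\textbf{Key steps.} For the ``$\Leftarrow$'' direction I would show by induction on $i$ that whenever $T_i[M]\ge k$, one can reconstruct an actual $\Delta$-temporal matching of $\TG[1,\Delta i]$ of size $\ge k$ that agrees with $M$ on the last $\Delta$-window. The base case $i=1$ is immediate since $T_1[M]=|M|$ and $M$ itself is a $\Delta$-temporal matching of $\TG[1,\Delta]$. For the inductive step, if $T_i[M] = |M| + T_{i-1}[M']$ for some $M' \in \mathcal M_{i-1}$ with $M \cup M'$ a $\Delta$-temporal matching, I would take the matching $N'$ of $\TG[1,\Delta(i-1)]$ guaranteed by induction for $M'$ (so $|N'| \ge T_{i-1}[M'] = |M'| + |N' \setminus M'|$, and $N'$ agrees with $M'$ on window $i-1$), and argue that $M \cup N'$ is a $\Delta$-temporal matching of $\TG[1,\Delta i]$: any conflict between a time edge of $M$ (living in window $i$) and a time edge of $N'$ would have to involve a time edge of $N'$ in window $i-1$, i.e.\ a time edge of $M'$, since windows further back are more than $\Delta-1$ apart from window $i$; and conflicts with $M'$ are excluded by the assumption that $M \cup M'$ is a $\Delta$-temporal matching. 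The size bound $|M \cup N'| \ge |M| + T_{i-1}[M'] = T_i[M]$ follows because $M$ and $N'$ can only overlap inside window $i-1$, which is accounted for.

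\textbf{The harder direction.} For ``$\Rightarrow$'', suppose $\TG$ has a $\Delta$-temporal matching $N$ of size $\ge k$. I would peel off one $\Delta$-window at a time from the right, using $i$-completeness repeatedly to ``snap'' the restriction of the current matching in window $i$ to some member of $\mathcal M_i$ without decreasing the total size and without creating conflicts outside window $i$. Concretely, I would prove by downward induction on $i$ from $\nicefrac{\lifetime}{\Delta}$ down to $1$ that there is a $\Delta$-temporal matching $N^{(i)}$ of $\TG$ with $|N^{(i)}| \ge k$ whose restriction to window $j$ lies in $\mathcal M_j$ for every $j \ge i$; the step uses the definition of $i$-completeness applied to $N^{(i+1)}$ at window $i$. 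Once all windows are ``snapped'', the sequence of restrictions $M_j := N^{(1)}[\Delta(j-1)+1,\Delta j] \in \mathcal M_j$ forms a valid path in the DP: $M_j \cup M_{j+1}$ is a $\Delta$-temporal matching (it is a subset of $N^{(1)}$), so a straightforward forward induction on $i$ shows $T_i[M_i] \ge \sum_{j=1}^{i} |M_j|$, whence $T_{\lifetime/\Delta}[M_{\lifetime/\Delta}] \ge |N^{(1)}| \ge k$ and a fortiori the maximum over $\mathcal M_{\lifetime/\Delta}$ is $\ge k$.

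\textbf{Main obstacle.} The delicate point throughout is the ``locality'' argument: that a conflict between a time edge in window $i$ and a time edge in an earlier window can only arise with a time edge in window $i-1$. This needs the observation that time edges in windows $i$ and $i-1$ can straddle the boundary and be within $\Delta-1$ of each other, but time edges in windows $\le i-2$ are separated from window $i$ by at least $\Delta$ time steps, so they are automatically $\Delta$-independent from everything in window $i$. I would state this as a small separate observation and invoke it in both directions; getting the index arithmetic right (the windows are $[\Delta(j-1)+1,\Delta j]$) is the only thing that requires care. Everything else is bookkeeping with the definition of $d$-completeness from \cref{lem:d-complete-family} and the recurrence \eqref{eq:dp}.
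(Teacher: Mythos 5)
Your proposal is correct and follows essentially the same route as the paper: both directions are inductions over the $\Delta$-windows, using $d$-completeness to snap each window's restriction into the family $\mathcal M_i$ and the locality of $\Delta$-conflicts to adjacent windows. The only cosmetic difference is in the forward direction, where you first snap all windows via a downward induction and then bound the DP by a separate forward induction, whereas the paper interleaves the snapping and the DP bound in a single forward induction; the ingredients are identical.
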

\begin{proof} 
		($\Rightarrow$):
		We show by induction over $i$ that if there is a $\Delta$-temporal matching~$M$ 
		in $\TG$,
		then there is an $M' \in \mathcal M_i$
		such that $T_{i}[M'] \geq |M[1,\Delta i]|$ and $M[1,\Delta(i-1)]\cup M' \cup M[\Delta i+1, \lifetime] $ is a $\Delta$-temporal matching
		of size at least~$|M|$.
		By~\eqref{eq:dp}, this is clearly the case for $i=1$, because $\mathcal M_1$ is $1$-complete for~$\TG$.

		For the induction step, let $i>1$ and
		assume that
		if there is a $\Delta$-temporal matching $M$ 
		in $\TG$, then
		there is an $M' \in \mathcal M_{i-1}$
		such that
		\begin{enumerate}[(i)]
				\item $T_{i-1}[M'] \geq |M[1,\Delta(i-1)]|$, and 
				\item $M[1,\Delta(i-2)]\cup M' \cup M[\Delta(i-1)+1,\tau]$ is a $\Delta$-temporal matching of size at least $|M|$.
		\end{enumerate}
		Let $M^*$ be a $\Delta$-temporal matching for $\TG$.
		By the induction hypothesis, 
		there is an $M' \in \mathcal M_{i-1}$ 
		such that $T_{i-1}[M'] \geq |M^*[1,\Delta(i-1)]|$
		and $\widehat M := M^*[1,\Delta(i-2)] \cup M' \cup M^*[\Delta(i-1)+1,\tau]$ is a $\Delta$-temporal matching of size at least $|M^*|$.
		Since $\mathcal M_i$ is $i$-complete for $\TG$,
		there is an $M'' \in \mathcal M_i$ such that $\widehat M[1,\Delta(i-1)] \cup M'' \cup \widehat M[\Delta i + 1,\tau]$
		is a $\Delta$-temporal matching of size at least $|\widehat M|\geq |M^*|$.
		By~\eqref{eq:dp}, we have that $T_i[M''] \geq |\widehat M[1,\Delta i]| \geq |M^*[1,\Delta i]|$.
		Hence, if there is a $\Delta$-temporal matching of size $k$ in $\TG$,
		then there is an~$M' \in \mathcal M_{\frac{\tau}{\Delta}}$ such that $T_{\frac{\tau}{\Delta}}[M'] \geq k$.
		
		($\Leftarrow$): 
		We show by induction over $i$ that for all $M' \in \mathcal M_i$ there is a $\Delta$-temporal matching $M$ in $\TG[1,\Delta i]$
		such that $|M|=T_i[M']$ and $M[\Delta(i-1)+1,\Delta i] = M'$.
		By~\eqref{eq:dp}, this is clearly the case for $i=1$, because $\mathcal M_1$ is $1$-complete for~$\TG$.

		For the induction step, let $i > 1$ and assume 
		that for all $M' \in \mathcal M_{i-1}$
		there is a $\Delta$-temporal matching $M$ in $\TG[1,\Delta(i-1)]$
		such that $|M| = T_{i-1}[M']$ and $M[\Delta(i-2)+1,\Delta (i-1)] = M'$.
		Let $M'' \in \mathcal M_{i}$.
		By~\eqref{eq:dp} and since $\mathcal M_{i-1}$ is $(i-1)$-complete,
		there is an $M' \in \mathcal M_{i-1}$ such that
		$M'' \cup M'$ is a $\Delta$-temporal matching and $T_i[M''] = T_{i-1}[M'] + |M''|$.	
		By the induction hypothesis, there is a $\Delta$-temporal matching $M$ in $\TG[1,\Delta(i-1)]$
		such that $|M| = T_{i-1}[M']$ and $M[\Delta(i-2)+1,\Delta (i-1)] = M'$.
		Since $M[\Delta(i-2)+1,\Delta (i-1)] = M'$ and $M' \cup M''$ is a $\Delta$-temporal matching, 
		$M \cup M''$ is a $\Delta$-temporal matching of size $T_{i-1}[M'] + |M''| = T_{i}[M'']$.
		Hence, if $\max_{M \in \mathcal M_\frac{\lifetime}{\Delta}}  T_{\frac{\lifetime}{\Delta}}[M] \geq k$,
		then there is a $\Delta$-temporal matching of size at least $k$ in $\TG$.

\end{proof} 

We now are ready to prove \cref{thm:fpt-for-vc-delta}.

\begin{proof}[Proof of \cref{thm:fpt-for-vc-delta}]
		Let $(\TG,k,\Delta)$ be an instance of \textsc{Temporal Matching}.
		We assume without loss of generality that there is no $\Delta$-window in $\TG$ which does not contain any time edge, 
		otherwise we can split $\TG$ into two parts, 
		compute the maximum size of a $\Delta$-temporal matching in each part separately,
		and check whether the sum is at least $k$.
		Furthermore, we may assume that $\Delta>1$; otherwise we solve $(\TG,k,\Delta)$ by checking 
		whether the sum of the maximum matching size for each layer is at least $k$.
		The running time for the case $\Delta=1$ can be bounded by $O(\nu|\TG|)$ because the folklore Ford–Fulkerson method
		needs at most~$\nu$ rounds to compute a maximum matching for any layer, where $\nu$ is the \dvcn{} of $\TG$.
		Moreover, we can assume without loss of generality that the lifetime $\tau$ of $\TG$ is a multiple of $\Delta$, otherwise we can add some empty layers at the end of $\TG$.

		We start by splitting $\TG$ into $\nicefrac{\tau}{\Delta}$ many $\Delta$-windows:
		for all $i \in [\nicefrac{\tau}{\Delta}]$ let~$\TG_i := \TG[\Delta(i-1)+1,\Delta i]$.
		This can be done in $O(|\TG|)$ time.
		To compute the \dvcn{}, 
		we first compute the vertex cover number of the underlying graph of $\TG_i$
		in $O(2^{\nu+1} |\TE(\TG_i)|)$ time by the folklore search-tree algorithm \cite{DF13}, for all~$i \in [\nicefrac{\tau}{\Delta}]$.
		Let $\nu'$ be the maximum vertex cover number over all underlying graphs $\TG_i$, where  $i \in [\nicefrac{\tau}{\Delta}]$.
		Note that there are only $\nu'+1$ possible values for the \dvcn{} $\nu$ of~$\TG$ since $\nu' \leq \nu \leq 2\nu'$.
		Hence, we can compute~$\nu'$ and then guess $\nu$ by branching into the $\nu'+1$ possible choices in $2^{O(\nu)}\cdot |\TG|$ time.  

		Consider the branch where we have chosen the correct \dvcn{}~$\nu$.
		We compute with \cref{lem:d-complete-family} for each $i \in [\nicefrac{\tau}{\Delta}]$ 
		a family $\mathcal M_i$ of $\Delta$-temporal matching of size $\Delta^{O(\nu)}$ which is $i$-complete for $\TG$ 
		in $\Delta^{O(\nu)} \cdot |\TE(\TG_i)|$. 
		Hence, it takes $\Delta^{O(\nu)} \cdot |\TG|$ time to compute the families~$\mathcal M_1,\dots,\mathcal M_{\frac{\tau}{\Delta}}$.

		Now we have met the preconditions to compute the dynamic program specified in~\eqref{eq:dp}.
		By \cref{lem:dp-correctness}, there is a $\Delta$-temporal matching of size at least $k$ in~$\TG$
		if and only if $\max_{M \in \mathcal M_{\frac{\tau}{\Delta}}} T_{\frac{\tau}{\Delta}}[M] \geq k$.
		Hence, if $(\TG,k,\Delta)$ is a \yes-instance,
		then at some point we pick the right choice for $\nu$ and correctly conclude that the input is a \yes-instance.
		To ensure that $(\TG,k,\Delta)$ is a \yes-instance if we say so, we perform a sanity-check by verifying 
		whether there is indeed a $\Delta$-temporal matching of size at least $k$ in our dynamic program.
		This can be done by additionally storing for the table entry $T_i[M'],i\in[\nicefrac{\tau}{\Delta}],M' \in \mathcal M_i$
		a $\Delta$-temporal matching $M$ in $\TG[1,\Delta i]$ of size $T_i[M']$ such that~$M[\Delta(i-1)+1,\Delta i] = M'$.
		Note that $\max_{M \in \mathcal M_{\frac{\tau}{\Delta}}} T_{\frac{\tau}{\Delta}}[M]$
		can be computed in $\Delta^{O(\nu)} \cdot \sum_{i=1}^{\frac{\tau}{\Delta}} |\TE(\TG[\Delta(i-1),\Delta i)]|$ time.
		Since each $\Delta$-window contains at least one time edge, we need $\Delta^{O(\nu)} \cdot |\TG|$ time 
		for each possible choice of $\nu$.
		Hence, the overall running for all choices of $\nu$ is $\Delta^{O(\nu)} \cdot |\TG|$ time.

		This completes the proof.
\end{proof}	
As already discussed in the proof of \cref{thm:fpt-for-vc-delta}, 
we can modify the dynamic program in~\eqref{eq:dp} to compute a $\Delta$-temporal matching of maximum size and not just the size.
Thus, we can solve the optimization variant of \textsc{Temporal Matching}.
\begin{corollary}
		Given a temporal graph $\TG$ and an integer $\Delta$,
		one can compute in~$\Delta^{O(\nu)}\cdot |\TG|$ time a maximum-cardinality $\Delta$-temporal matching in $\TG$, where~$\nu$ is the~\dvcn{}.
\end{corollary}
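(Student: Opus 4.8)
The plan is to augment the dynamic program of \eqref{eq:dp} with back-pointers and then reconstruct a witness matching from an optimal final cell; the program itself is left untouched, so the running time of \cref{thm:fpt-for-vc-delta} carries over. Concretely, alongside each table entry $T_i[M]$ I would store a pointer $\pi_i[M]$: set $\pi_1[M] := \bot$ for the base case, and for $i>1$, if $T_i[M] = |M| + T_{i-1}[M']$ is realised by some $M' \in \mathcal M_{i-1}$ with $M \cup M'$ a $\Delta$-temporal matching, set $\pi_i[M] := M'$ (if $A(M)=\emptyset$, i.e.\ $T_i[M]=0$, leave $\pi_i[M] := \bot$). Recording these pointers costs nothing beyond the work already performed while evaluating \eqref{eq:dp}, each table still has size $\Delta^{O(\nu)}$, and computing all of them still takes $\Delta^{O(\nu)}\cdot|\TG|$ time.

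To produce the output, first compute $M^\star := \argmax_{M \in \mathcal M_{\tau/\Delta}} T_{\tau/\Delta}[M]$ and put $j := \tau/\Delta$. Starting from $(j,M^\star)$, follow the pointers: set $M_j := M^\star$, and while $\pi_i[M_i] \neq \bot$ set $M_{i-1} := \pi_i[M_i]$ and decrement $i$; this yields indices $1 \le \ell \le j$ together with matchings $M_\ell \in \mathcal M_\ell$. Output $\widehat M := \bigcup_{\ell} M_\ell$. Since each $M_\ell$ consists only of time edges in the $\ell$-th $\Delta$-window $\TG[\Delta(\ell-1)+1,\Delta\ell]$, the total number of time edges touched during the trace-back is at most $\sum_{\ell} |\TE(\TG_\ell)| = O(|\TG|)$, so the reconstruction runs in $O(|\TG|)$ time and the overall running time stays $\Delta^{O(\nu)} \cdot |\TG|$.

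Correctness is a direct re-reading of \cref{lem:dp-correctness}. Maximality of $|\widehat M|$ follows from its $(\Rightarrow)$ direction: $|\widehat M| = \sum_{\ell} |M_\ell| = T_j[M^\star] = \max_{M \in \mathcal M_j} T_j[M]$, which equals the maximum size of a $\Delta$-temporal matching in $\TG$. That $\widehat M$ is a $\Delta$-temporal matching is exactly the $(\Leftarrow)$ direction: by induction along the pointer chain, $\bigcup_{\ell \le i} M_\ell$ is a $\Delta$-temporal matching in $\TG[1,\Delta i]$ whose last window equals $M_i$, because $\pi_i[M_i] = M_{i-1}$ guarantees $M_{i-1} \cup M_i$ is a $\Delta$-temporal matching and, by \eqref{eq:dp}, $M_{i-1}$ is the last window of the inductively constructed matching $\bigcup_{\ell < i} M_\ell$; time edges lying in non-consecutive windows are more than $\Delta$ apart and hence automatically $\Delta$-independent.

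The main obstacle is the running-time bookkeeping rather than any conceptual difficulty: one must resist storing an explicit witness set in each of the $\Delta^{O(\nu)}$ cells of all $\tau/\Delta$ tables, which would cost $\Delta^{O(\nu)}\cdot|\TG|\cdot(\tau/\Delta)$ in the worst case and destroy the linear dependence on $|\TG|$. Keeping one pointer per cell and reconstructing the matching only once, from the optimal cell, is precisely what preserves the factor $|\TG|$; everything else is immediate from \cref{lem:dp-correctness}.
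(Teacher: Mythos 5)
Your proposal is correct and follows essentially the same route as the paper, which simply remarks that one can additionally store, for each table entry $T_i[M']$, a witness $\Delta$-temporal matching of $\TG[1,\Delta i]$ whose last window equals $M'$. Your refinement of keeping only back-pointers and reconstructing a single witness from the optimal final cell is a slightly more careful implementation of the same idea (and cleanly preserves the $\Delta^{O(\nu)}\cdot|\TG|$ bound), but it is not a different proof.
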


\section{Conclusion}
While we could decrease the running time to solve \textsc{Temporal Matching} exponentially in terms of $\Delta$ compared 
to the algorithm of \citet{MMNZZ}, we left open 
whether 
in \cref{thm:fpt-for-vc-delta}
we can get rid of the running time dependence on $\Delta$.

\subparagraph{Acknowledgements.} The author wishes to 
thank Rolf Niedermeier and anonymous reviewers for their useful comments on the
manuscript.

\bibliographystyle{plainnat}
\bibliography{ref-short}

\end{document}